\documentclass[conference]{IEEEtran}

\pagestyle{plain}

%
\ifCLASSINFOpdf
\else
\fi
\usepackage[utf8]{inputenc}
\usepackage{cite}
\usepackage{graphicx}
\usepackage{textcomp}
\usepackage{bmpsize}
\usepackage{xcolor}
\usepackage{caption,subcaption}
\usepackage{xspace}
\usepackage{diagbox}
\usepackage{tabularx}
\usepackage{float}
\usepackage{comment}
\usepackage{multicol,multirow}
\usepackage{tabularx}
\usepackage{array}
\usepackage{ragged2e}
\usepackage{algorithm}
\usepackage{algpseudocode}

\usepackage{amsthm}
\usepackage{listings}
\usepackage{textcomp}
\lstset{upquote=true}


\usepackage{mathptmx}
\usepackage[bigdelims]{newtxmath}
\usepackage[ieeetran]{henry}
\usepackage{tabularx}
\usepackage{microtype}
\PassOptionsToPackage{activate={true,nocompatibility},final,tracking=true,kerning=true,spacing=true,factor=1100,stretch=10,shrink=10}{microtype}
\microtypecontext{spacing=nonfrench}

\usepackage[most]{tcolorbox}

\hyphenation{que-ries}

\usepackage{versions}
    \includeversion{extended}

\usepackage{tikz}
\usepackage{booktabs,multirow}

\newtheorem{theorem}{Theorem}[section]

\newtheorem{corollary}[theorem]{Corollary}

\newtheorem{definition}[theorem]{Definition}
\newtheorem{observation}[theorem]{Observation}

\usepackage{setspace}



\begin{document}
\title{Private Aggregate Queries to Untrusted Databases*}



%
\author[Syed Mahbub Hafiz and Chitrabhanu Gupta and Warren Wnuck and Brijesh Vora and Chen-Nee Chuah]
{
    \IEEEauthorblockN{\textlarger[0.5]{Syed Mahbub Hafiz}\textsuperscript{\textsection}\IEEEauthorrefmark{2}}
    \IEEEauthorblockA{University of California, Davis\\Indiana University, Bloomington\\
        shafiz@\{ucdavis.edu, indiana.edu\}
    }
    \and
    \IEEEauthorblockN{\textlarger[0.5]{Chitrabhanu Gupta\IEEEauthorrefmark{2}, Warren Wnuck, Brijesh Vora, and Chen-Nee Chuah}}
    \IEEEauthorblockA{Department of Electrical and Computer Engineering\\University of California, Davis, CA, USA\\
        \{cbgupta, wrwnuck, bhvora, chuah\}@ucdavis.edu
    }
}




\maketitle
\begingroup\renewcommand\thefootnote{\textsection}
\footnotetext{~This author worked on the theoretical and practical aspects of this project when he was affiliated with IU Bloomington and UC Davis, respectively.\\[.35ex]~~\IEEEauthorrefmark{2}Co-first author.\\
[.35ex]*This is a preprint version of the paper accepted at the Network and Distributed System Security (NDSS) Symposium 2024.}
\endgroup

\begin{abstract}
An essential part of ensuring privacy for internet service users is to protect what data they access so that the database host cannot infer sensitive information (e.g., political affiliation, sexual orientation, etc.) from the query pattern to exploit it or share it with third parties. Often, database users submit aggregate queries (e.g., SUM, MEAN, etc.) with searching and filtering constraints to extract statistically meaningful information from a database by seeking the privacy of its query's sensitive values and database interactions. Private information retrieval (PIR), a privacy-preserving cryptographic tool, solves a simplified version of this problem by hiding the database item that a client accesses. Most PIR protocols require the client to know the exact row index of the intended database item, which cannot support the complicated aggregation-based statistical query in a similar setting. Some works in the PIR space contain keyword searching and SQL-like queries, but most need multiple interactions between the PIR client and PIR servers. 
Some schemes support searching SQL-like expressive queries in a single round but fail to enable aggregate queries. These schemes are the main focus of this paper. To bridge the gap, we have built a general-purpose novel information-theoretic PIR (IT-PIR) framework that permits a user to fetch the aggregated result, hiding all sensitive sections of the complex query from the hosting PIR server in a single round of interaction. In other words, the server will not know which records contribute to the aggregation.
We then evaluate the feasibility of our protocol for both benchmarking and real-world application settings. For instance, in a complex aggregate query to the \emph{Twitter} microblogging database of $1$ million tweets, our protocol takes $0.014$ seconds for a PIR server to generate the result when the user is interested in one of $\sim\!3k$ user handles. In contrast, for a much-simplified task, not an aggregate but a positional query, Goldberg's regular IT-PIR (Oakland 2007) takes $1.13$ seconds. For all possible user handles, $300k$, it takes equal time compared to the regular IT-PIR. This example shows that complicated aggregate queries through our framework do not incur additional overhead if not less, compared to the conventional query.

\end{abstract}
\begin{IEEEkeywords}
\keywordstext
\end{IEEEkeywords}



\section{Introduction}\label{s:intro}
\vspace*{-0.1cm}


With the progressive digitization of information, concerns about how applications and software services keep one's information private keep increasing. Companies utilize their users' information for their gain, while employees within companies are capable of obtaining data for malicious activities~\cite{isr,homeland}. As the breach of users' privacy in digital space becomes more rampant, the need for privacy-enhancing technologies becomes more pressing. We specifically address the situation where a user wishes to retrieve information from an untrusted database without revealing to the database what specific information they wish to obtain. Private information retrieval (PIR) \cite{chor1998private, riise2019introduction, devet2012optimally, hafiz2019bit} is a cryptographic technique that solves this problem. However, while numerous privacy-enhancing technologies, including various PIR protocols, have been conceived over the past two decades to address this problem, the vast majority of them are not viable for facilitating a comprehensive set of statistical queries that might be required for performing data analytics on sensitive databases. A primary reason for this is that most protocols cannot support commonly used aggregate queries with a provable privacy guarantee. We develop a novel framework for information-theoretic private information retrieval protocols – that utilizes the concept of introducing sparse matrix-like auxiliary data structures similar to Hafiz-Henry's Querying for Queries work~\cite{hafiz2017querying} to support various aggregate queries frequently used in data analytics workflows. Unlike the vast majority of existing protocols, our scheme does not require physical positional information to fetch aggregated results from a database, and it just requires contextual information about the data to fetch the required blocks of data.

There are various practical applications of a protocol that can allow users to privately fetch data from untrusted databases and perform statistical queries on the same for data analysis. Some sample events for private information retrieval with aggregate queries include the following:

\emph{Event 1: Social Networking Platforms:} Social network platform databases are often used for various kinds of data analyses and thus often respond to various aggregate queries. However, untrusted social media platform databases can potentially learn substantial information about a user querying the database. A typical example of this would be identifying a user's political affiliation or interests by observing their queries. A user might be interested in the total number of positive reactions to social media posts made by a politician in interest, in which case they would send a SQL query to the database of the form: {\fontencoding{T1}\fontfamily{cmr}\selectfont \color{gray!75!black} SELECT SUM(number\_of\_likes) FROM user\_posts WHERE user\_id =`Joe Biden.'}
\normalfont

A user might also be curious about the social media presence of various politicians and construct a histogram of the number of social media posts made by politicians from a specific party. The equivalent SQL query would be: 
 {\fontencoding{T1}\fontfamily{cmr}\selectfont \color{gray!75!black} SELECT COUNT(*) FROM user\_posts WHERE party = `Democratic' GROUP BY user\_id.}
\normalfont
Privatizing these queries would prevent social media databases from identifying personal information about users through their search interests. 

\emph{Event 2: Booking Flights:} When browsing flight options for travel, users of a travel website or app are often likely to query for round-trip ticket prices. Additionally, some travel itineraries might also require multi-city flights with layovers, in which case the prices of all connecting flights are aggregated. These would require a query such as 
{\fontfamily{cmr}\selectfont\color{gray!75!black}\small SELECT SUM(price) FROM flights WHERE flight\_id in (`1003', `2319') }\normalfont
A private version of this sum query would prevent the database owner from gaining information about the user's travel plans and increasing flight prices on successive searches, and would also protect the user from targeted advertising. 

\emph{Event 3: Stock Market Data:} Databases containing information about stocks and the fluctuations of their values might be queried by several users interested in stock market trends. Users might be curious about the maximum daily fluctuation of a stock's value or mean fluctuation of a stock's value across a fixed period. The queries would be of the form: {\fontfamily{cmr}\selectfont \color{gray!75!black}\small{SELECT MAX(daily\_value\_change), stock\_id FROM stocks WHERE month = `June' GROUP BY stock\_id,}} \normalfont
or
{\fontfamily{cmr}\selectfont \color{gray!75!black}\small{SELECT AVG(daily\_value\_change) FROM stocks  WHERE month = `June.'}} \normalfont

The database owner that can track user queries would be able to observe a spike in interest toward a specific stock if many users execute the above queries over a short period and could misuse that information. Thus, such databases should ideally have PIR systems implemented.


The applications of our protocol discussed thus far all have the common assumption that the database is hosted within the infrastructure of the organization that owns the database. The discussions all focus on the various ways in which the entities that own the databases can potentially exploit the knowledge of the details of user queries. However, due to the volume of data organizations possess, it is common practice to outsource the databases in cloud infrastructures that are provided by third-party organizations. In such scenarios, even if the database owner does not intend to observe (and essentially abuse) user queries, the cloud service provider hosting the database might turn out to be malicious and be interested in learning the details of queries made to the databases. They could potentially monitor user queries and compile adequate information about users for targeted advertising or sell the compiled information to other organizations without the consent of the users. Our protocol could be implemented on the cloud infrastructure providers' servers to prevent misuse of user information and breach of privacy. 

A large number of the existing protocols for querying databases are not capable of performing keyword-based expressive queries. Even for the ones that do, they do not support aggregate queries. Furthermore, existing protocols capable of performing private aggregate queries are either limited by their complexity, such as in terms of the number of times a database is accessed for a single query, or the variety of aggregations they can support \cite{wang2017splinter, boyle2015function, zhao2022information, ahmad2021coeus, gui2021rethinking}. Thus, while performing data analytics, if a user wishes to not reveal any information about their queries, they face many levels of challenges, where only a few existing protocols can be successfully implemented to support keyword-based queries that are intuitive parallels of commonly used SQL queries. A data analyst may desire to have access to a full spectrum of tools and statistics required to develop data analytics pipelines without revealing any information to the database owner. Our key contributions are summarized as follows:
\vspace*{-0.15cm}
\begin{enumerate}
    \item We propose a framework that enables users to submit aggregate statistical queries on untrusted databases privately with a provable security guarantee, i.e., with PIR guarantees.
    \item To build the novel framework, we introduce a new kind of standard aggregate vector, unlike the basis vector known in the PIR literature. We propose the standard aggregate vectors that contain multiple ones instead of a single one to allow component-wise aggregation.
    \item The door of various possibilities is opened when we construct and batch auxiliary indexes of standard aggregate queries-based matrices utilizing querying for queries and polynomial batch coding techniques, respectively. This empowers the conflux of aggregation with searching, sorting, ranking, and various expressive queries.
    \item Having parallelizable computations, the GPU implementation of extra computations demonstrates remarkable benchmarking results.
    \item Additionally, we demonstrate through compelling case studies the practicality of implementing our protocol in real-world applications. We demonstrate that we can perform an aggregate query on a database with a million records and around $3,000$ search terms in just $0.014$ seconds. We have an implementation that can be scaled to enable aggregate queries to a database with $16$ million records and by $65$ thousand search terms, while the PIR server can serve up to $4$ thousand PIR clients per second. 
    \item Our artifact has been thoroughly evaluated and acquired `available,' `functional,' and `reproduced' badges. It is under an open-source license and available at \url{https://doi.org/10.5281/zenodo.10225325}~\cite{hafiz2022paq} and \url{https://github.com/smhafiz/private_queries_it_pir/tree/v1.0.0}. 
\end{enumerate}

\vspace*{-0.30cm}
\section{Threat Model}\label{s:threat-model}
\vspace*{-0.10cm}
We assume a threat model where the data provider is not trusted, which implies that the data provider can observe all queries run on their database by any user, the computations taking place on the server, and which database rows are scanned and which are left untouched. This applies to all the servers involved in the protocol which host the database. The users are any individual with access to the database for querying and fetching data from it. Under this assumption, we require the database provider not to be able to learn any contextual information about the queries that a user wishes to run on the database. Our protocol aims to mitigate passive activities such as eavesdropping. The user needs to know minimal information about the database, the number of rows in the permutation matrix data structures discussed in more detail in the later sections, and a keyword for the query is sufficient for privately retrieving the information from the database that the user desires. Another assumption for our threat model is that only a certain threshold ($\t$) number of PIR servers may collude, with information about the user’s query becoming revealed upon collusion of several servers greater than the specified threshold, $\t$. However, it is possible to tune this threshold as desired during the implementation of the protocol. Note that all further mentions of a database being untrusted would conform with the definition of untrusted provided at the beginning of this section. Additional details about expanding our model to a $\v$-Byzantine robust model can be found in Section~\ref{ss:byzantinerobust}.

\begin{figure}[h]
\begin{center}
\includegraphics[width=.48\textwidth]{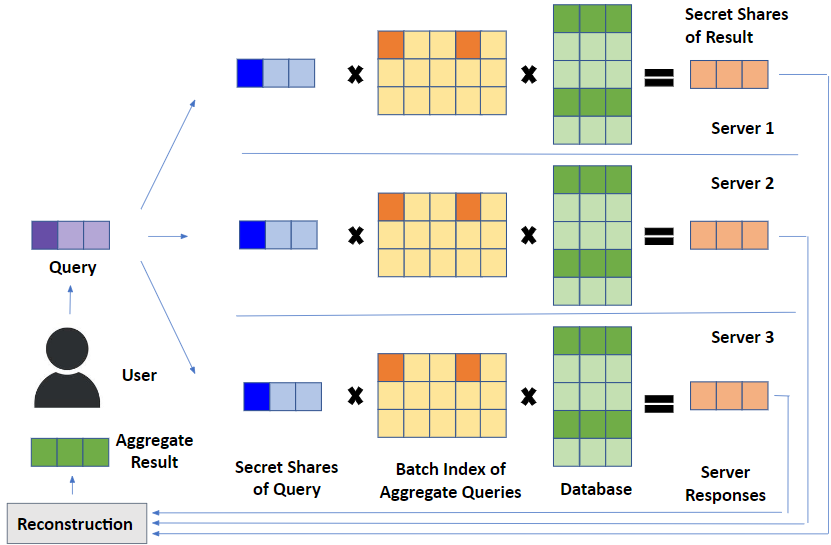}
\caption{Schematic diagram of the proposed indexes of aggregate queries-powered IT-PIR protocol: In order to demonstrate how our key protocol works, we highlight the example of a query to a hypothetical flight records database for finding the sum of ticket prices of connecting flights. To serve this query, the query vector is first created by the client and then Shamir’s Secret Sharing algorithm is used to generate a secret share vector for each of the servers. Each secret share vector is sent to a server, where they are multiplied with the relevant index of aggregate queries matrix, where each row corresponds to a different flight id. The resultant vector is then multiplied with the copy of the flight records database matrix on that server and the product is returned to the client. Each of these server responses are then used to reconstruct the aggregated response using the Lagrange polynomial interpolation. }
\label{fig:schematic}
\vspace*{-1cm}
\end{center}
\end{figure}

\section{Background}\label{s:background}
\vspace*{-0.10cm}
Privacy-enhancing technologies (PETs) \cite{goldberg1997privacy, tavani2001privacy, senivcar2003privacy} allow people to control the distribution and usage of information about them. PETs typically utilize cryptographic techniques with specific security properties that lead to their privacy guarantees. These cryptographic primitives’ security properties stem from fundamental conjectures of information theory or computation complexity. PETs utilizing computational complexity assume that it is computationally infeasible for an adversary to obtain sensitive information by observation or participation. Still, malicious parties with unlimited computation resources can break this system. On the other hand, adversaries do not threaten information-theoretic PIR (IT-PIR) protocols with infinite computation resources. However, they come with other assumptions, e.g., a widespread one relevant to our work is that out of all the parties participating in the protocol, only a certain number of them (up to a specific threshold number, $\t$) may collude. This assumption is observed in various kinds of protocols, such as secret sharing \cite{shamir1979share, beimel2011secret}, onion routing networks \cite{dingledine2004tor}, mix networks \cite{danezis2003mixminion}, and cryptographic voting \cite{ryan2006pret, chaum2009scantegrity}.

Multi-server IT-PIR is quite efficient compared to counterparts such as computational assumption-based single-server PIR, which suffers heavy computational overhead for (homomorphic) encryption. A naive single-server PIR involves the client downloading the entire database to retrieve the desired record and is thus infeasible for practical implementation. An improvement in the performance and feasibility of deployment for single server PIR schemes have been observed with the introduction of the offline-online PIRs, where during the offline phase, some information about the database is precomputed and exchanged to optimize the server’s processing time during the online phase. However, even the most performant single-server PIR schemes~\cite{henzinger2023one, menon2022spiral} suffer from high query processing times and, thus, are overall less efficient than multi-server schemes.

Although more efficient, successful practical implementation of multi-server IT-PIR schemes requires a non-collusion guarantee between the servers, as stated earlier. Some works explicate how non-collusion assumptions in cryptographic protocols such as PIR and secure multi-party computation are deployable in practice~\cite{gong2022more, wang2016information, wang2014efficient}. These works offer collusion mitigation strategies for collusions that do not occur via unknown side-channels external to the protocols. They propose modeling collusion as games and design retaliation mechanisms that result in the most viable stable strategies for the game to not involve collusion, thus leading to the adoption of these strategies within the protocols.

IT-PIR assumes that the database host is untrusted; thus, a user’s query must not indicate the requested data. The trivial solution to this problem entails sharing the entire database with the user so that they may obtain their desired rows of data locally. However, as discussed, this is obviously not practical because databases can be quite large. So we instead explore protocols with total communication costs less than (sublinear to) the database size, as stated in~\cite{hafiz2019bit}, as the \textit{non-triviality} property of any practical PIR protocols.

The roots of our proposed protocol can be traced back to Chor et al.’s multi-server IT-PIR protocol~\cite{chor1998private} based on the idea of sharing queries among multiple non-colluding database servers. The protocol models the database as a string of bits out of which a user fetches a specific bit while keeping the index of the fetched bit private and is further extended in Goldberg’s robust IT-PIR protocol \cite{goldberg2007improving}, which models the database as a set of $b$-bit blocks and allows a user to fetch multiple such blocks obliviously. This emulates a more realistic scenario and led to the development of protocols that facilitate querying variable-length records \cite{henry2013one}.
\vspace*{-0.25cm}
\subsection{Vector-Matrix PIR Model}\label{s:vector-matrix-model}
\vspace*{-0.15cm}
Our protocol is built around the commonly used vector-matrix model for PIR \cite{goldberg2007improving}, and we model our database $\D$ as an $\rbys$ matrix, $\D\in\F^{\rbys}$, where $\r$ corresponds to the number of data blocks. Each data block has $\s$ words, $\Di\in\F^{\s}$, and each word $\w$ is a finite field element from $\F$. To fetch the \ith block of data, $\Di$, a user encodes an $\r$-dimensional query vector, $\e\in\F^{\r}$, with a $1$ in the \ith position and $0$s at every other index. Per linear algebra, the product of this query vector with the database matrix, $\e\cdot\D\in\F^{\s}$, produces the desired \ith data block, $\Di$. These requests are positional queries because they require a user to know the physical location of the blocks of data in the database, but having to possess knowledge of row numbers for every block in the matrix can be a major inconvenience for both the user and the service provider. This procedure is not private and thus requires strategic modification to make it private. As mentioned earlier, we opt to use linear secret sharing to overcome this problem, where the user shares their query vector component-wise across $\ell$ number of servers, the share vectors are multiplied with copies of the database matrix hosted in each server, and the user receives independent products from each of the servers. The user then performs a component-wise secret reconstruction using the responses received from the servers to obtain the desired block of data. Following in the footsteps of \cite{hafiz2017querying}, we select Shamir’s secret sharing scheme \cite{shamir1979share} as proposed in Goldberg’s IT-PIR protocol \cite{goldberg2007improving} as our choice of linear secret sharing scheme. Shamir’s $(\t+1, \ell)$-threshold scheme in the vector-matrix model ensures that the user obtains their desired results as long as $\t+1$ or more servers out of $\ell$ respond. In contrast, a collusion of fewer than $\t+1$ servers fails to reveal anything about the information requested by the user.

Our protocol exploits the concept of indexes of queries introduced in Hafiz-Henry's protocol. The protocol introduced in \cite{hafiz2017querying} utilizes indexes of queries to support expressive queries that utilize contextual information rather than positional information to fetch relevant data blocks. Although Chor et al.~\cite{chor1997private} introduced a scheme for keyword-based querying that translates keyword searches into positional queries, is comparable but with multiple rounds of interactions, \cite{hafiz2017querying}’s index of queries mechanism was more suitable for our protocol since it requires a single round of interaction to fetch the data, thus minimizing communication costs.



\vspace*{-0.35cm}
\subsection{Polynomial Batch Coding}\label{s:batch-coding}
\vspace*{-0.15cm}
To potentially reduce the size of the database hosted on a server, Henry~\cite{henry2016polynomial} modified the vector-matrix model, which results in each server hosting an encoded bucket obtained from the database. The encoded bucket is typically smaller than the database, reducing the cost of hosting data and communication and computation costs.

As proposed in \cite{henry2016polynomial}, the buckets are obtained via the ramification of Shamir’s secret-sharing scheme. In the typical secret-sharing scheme, there would be a threshold $\t$, the maximum number of shares an attacker could possess and still fail to learn the secret. But possessing $\t+1$ or more shares results in the secret being revealed to the attacker. Rampification involves relaxing the threshold, where the privacy does not entirely collapse as soon as an attacker comes to possess $\t+1$ shares. Instead, while the attacker still knows nothing about the secret while possessing $\t$ or fewer shares, to learn the secret thoroughly, the attacker would now have to possess $\t+\u$ secret shares. Possessing more than $\t$ but fewer than $\t+\u$ shares results in a partial loss of privacy. This relaxation benefits a protocol capable of packing $\u$-times as many bits into the secret sharing scheme. The shares possessed by the user still look the same, but there is a $\u$-fold improvement in the information that can be packed. So, in Shamir’s secret sharing, the random polynomials still encode a secret in the $y$-intercept. The difference is that we can obtain different secrets upon evaluating the polynomials at other values of $x$.

This idea is extended to modify the database by changing its $\r$ blocks into $\frac{\r}{\u}$ $\u$-tuples, interpolating component-wise through each of them at some predefined $x$-coordinates to obtain $\frac{\r}{\u}$ length-$\s$ vectors of degree-$(\u - 1)$ polynomials and then placing a single component-wise evaluation of each vector of polynomials into each of the $\ell > \u$ buckets. So, to fetch the \ith block from the database, the user would identify which out of the available $\frac{\r}{\u}$ buckets possesses evaluations of the polynomial vector passing through the block being sought and the $x$-coordinate value at which the polynomial passes through the block. 

In the simplest form of our proposed scheme, every PIR server must host copies of a potentially large number of indexes of aggregate queries to cater to a broad spectrum of user queries. However, an adversary capable of observing the activity in the servers can identify which index of queries is accessed by a user query, thus obtaining information about the user’s query and subsequent data interests. This could be mitigated if, instead of having multiple indexes of queries for various queries, all the indexes of queries that need to be hosted on the server could be batched together using polynomial batch coding, thus resulting in just a single batch index of queries of polynomials. As a result, every user query must pass through the same batch index of queries hosted on each server, thus obstructing information leakage about the user query through the hosted indexes of queries. Our protocol greatly benefits from $\u$-ary coding because the indexes of queries stored in the servers can be batched together, preventing information leakage, and drastically reducing the storage and communication cost and the cost of computations when queried.

Further enhancements to the scheme are achieved by implementing the data structure called indexes of batch queries, where users can aggregate multiple data blocks through a single request. Like indexes of queries, the concept of $\u$-ary coding can also be used to perform polynomial batch coding of multiple users queries into a single query vector. The responses provided by the servers after the batched queries pass through indexes of queries can be reconstructed and evaluated at different $x$-coordinate values to obtain the blocks requested by each query that was batched into the query vector. While this enhancement allows for the performance of top-$K$ queries in its simplest instantiation, it can also be utilized by a user to generate \emph{histograms} through a single request by batching together count queries for every value of the histogram field being constructed. The mechanism and practical examples of histogram queries are further explored in later sections discussing implementation details of supported aggregate queries and case studies.
\vspace*{-0.25cm}
\subsection{Index of Queries Technique}\label{s:index-of-queries}
\vspace*{-0.15cm}
A data structure called a permutation matrix is at the heart of the idea of indexes of queries, as introduced in \cite{hafiz2017querying}. Permutation matrices are constructed by permuting the rows of an identity matrix. In general, the permutation is performed according to the context of various slices of information. In permutation matrices, each row is mapped to a possible column value in the database to apply a query filter. Each column in the permutation matrix corresponds to a row in the database. To serve a query for a specific condition, for each row of the matrix, a $1$ is placed in the required column of the matrix to demarcate the corresponding row in the database, thus indicating which record in the database satisfies the condition for each of the possible values of the filter column in the database (each of which is mapped to a row in the permutation matrix). Thus, a permutation matrix for the number of hospitalizations and one for the number of instances of a particular disease would look different. Still, it can be constructed from the same database containing all the relevant information for constructing the matrices. Depending on the $1$s in the permutation matrices, query filter clauses involving equalities and inequalities can be supported. 

Permutation matrices are the most uncomplicated instances of indexes of queries, which are a generalization of the same. Permutation matrices, by definition, are restricted to a single $1$ in each row and column, thus representing the particular case of the index of matrices, where a sorted view of each block of the database is presented. The index of queries is a broader concept. The matrices may take a few different forms, possessing no non-zero elements in some columns, making certain database sections inaccessible through them, or even having multiple $1$s in each row. We discover that permuting identity matrices limit the queries to keyword searches and many other non-aggregate applications; permuting over a matrix with multiple $1$s in each row and $0$s in the remaining positions allows for each row in the matrix, aggregation over several records in the database, the columns in the permutation matrix corresponding to which have $1$s in that row, owing to linearity. Thus, matrices permuted from identity matrices and matrices with multiple $1$s in each row are constructed according to query interests for storage on the PIR servers.

The number of columns in each index of queries matches the number of rows in the database. Still, a user is only required to know the number of rows in each index of queries to construct a query since the dimension of the query vector must be the same as the number of rows in the index of queries. Additionally, a user does not require any positional knowledge for the information they seek, contrary to most prior PIR schemes, thus also providing the database owner with the benefit of less sharing of information. To perform a query successfully and obtain the information that they desire, a user needs to know the number of rows of the index of queries to construct a correct dimensional query vector and the necessary keyword to pass along to the PIR servers to identify the correct index of queries for the query, for example, ``hospitalizations'' to locate the index of queries tracking the number of hospitalizations and obtain information about hospitalizations by state.
\vspace*{-0.25cm}
\section{Private Aggregate Queries}\label{s:aggregate-queries}
\vspace*{-0.10cm}

The non-private version of the IT-PIR framework is quite simple, the multiplication between the standard basis vector (one-hot vector) and the database produces the result. However, the application of the XOR-based additive secret sharing technique or Shamir's additive secret sharing turns it into a private version with perfect (information-theoretic) security, where per-server computation is much better than computational-PIR techniques. Likewise, though the multiplication between a standard aggregate vector and the database produces component-wise addition of columns, combining this with batch coding and the index of queries technique empowers PIR to support more sophisticated queries with the same perfect security guarantee and better communication and computation costs (when p << r). This is true that all of these are achievable due to the linearity of the PIR constructions. In this section, we present our indexes of aggregate queries techniques-based IT-PIR constructions. 
\vspace*{-0.25cm}
\subsection{Indexes of aggregate queries}\label{s:expressive:aggregate}
\vspace*{-0.15cm}
First, we define the standard aggregate vector and simple index of aggregate queries next.

\begin{definition}\label{def:aggregatebasis}
Let $ \setofindexes$ be a set of indices, $\numberofones$ be the cardinality of the set $\setofindexes$, $\numberofones=\cardinality{\setofindexes}$, and $2\leq\numberofones\leq\rows$. A \textsl{standard aggregate vector} $ \aggregate[\setofindexes]\in\F^{\rows}$ is a $(0,1)$-vector with an $1$ value at each index of the set $\setofindexes$.
\end{definition}
\vspace*{-0.20cm}
Note that when $\cardinality{\setofindexes}=1$, the standard aggregate vector turns into a standard basis vector. The following observation is straightforward when we multiply a standard aggregate vector with a database using simple linear algebra.

\begin{observation}
If $\aggregate[\setofindexes]$ is a standard aggregate vector and $\setofindexes$ is a set of indices $\set{i_1,\ldots,i_{\lastone}}$ where $\lastone=\hammingweight{\aggregate[\setofindexes]}$,\footnote{$i_{\lastone}$ is the position where the last $1$ is in the $\aggregate[\setofindexes]$, and $\lastone$ can be computed by the hamming weight, $\hammingweight{}$, of $\aggregate[\setofindexes]$.} then $\aggregate[\setofindexes]\cdot\D$ is a component-wise linear combination of particular records (indexed by $\setofindexes$) of the database.
\end{observation}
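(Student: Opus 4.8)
The plan is to prove the statement by directly expanding the vector--matrix product coordinate by coordinate and then invoking the $(0,1)$-structure of $\aggregate[\setofindexes]$ from Definition~\ref{def:aggregatebasis} to discard the vanishing terms. Since $\aggregate[\setofindexes]\in\F^{\r}$ and $\D\in\F^{\rbys}$, the product $\aggregate[\setofindexes]\cdot\D$ is a length-$\s$ row vector, and it suffices to understand each of its $\s$ coordinates separately; this is precisely the sense in which the resulting combination is \emph{component-wise}.

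First I would fix a word index $j\in\set{1,\ldots,\s}$ and write out the $j$-th coordinate of the product as the full sum $(\aggregate[\setofindexes]\cdot\D)_j=\sum_{k=1}^{\r}(\aggregate[\setofindexes])_k\,\D_{k,j}$, where $\D_{k,j}$ denotes the $j$-th word of the $k$-th record $\D_k$. Next I would apply the definition of the standard aggregate vector: $(\aggregate[\setofindexes])_k=1$ exactly when $k\in\setofindexes$ and $(\aggregate[\setofindexes])_k=0$ otherwise. Every term with $k\notin\setofindexes$ is therefore annihilated, and the sum collapses to $(\aggregate[\setofindexes]\cdot\D)_j=\sum_{k\in\setofindexes}\D_{k,j}$, a sum over exactly the $\lastone=\hammingweight{\aggregate[\setofindexes]}$ indices $\set{i_1,\ldots,i_{\lastone}}$ that carry a $1$.

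Finally, since this identity holds for every coordinate $j$, I would assemble the coordinates back into a single vector equation $\aggregate[\setofindexes]\cdot\D=\sum_{k\in\setofindexes}\D_k$, i.e.\ the coordinatewise sum of precisely the records indexed by $\setofindexes$. To match the wording of the statement, I would observe that this is a linear combination of those records in which every coefficient equals $1$; the unweighted sum is merely the canonical special case of a component-wise linear combination, which is all the observation claims. I do not expect a genuine obstacle here: the result is immediate from the linearity of the product together with the indicator structure of $\aggregate[\setofindexes]$. The only point requiring care is bookkeeping over the two distinct index ranges---rows $k\in\set{1,\ldots,\r}$ versus words $j\in\set{1,\ldots,\s}$---so that ``component-wise'' is correctly read as an identity holding separately in each of the $\s$ coordinates, and so that the restriction of the outer sum to $\setofindexes$ is justified solely by the zeros of the aggregate vector.
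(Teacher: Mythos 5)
Your proof is correct and takes essentially the same approach as the paper: the paper treats the observation as immediate from simple linear algebra and records exactly your final identity, $\aggregate[\setofindexes]\cdot\D=\Di[i_1]+\ldots+\Di[i_{\lastone}]\in\F^{\cols}$, a component-wise addition across the selected records. Your coordinate-by-coordinate expansion merely fills in the bookkeeping the paper leaves implicit.
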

\vspace*{-0.20cm}
That means, $\aggregate[\setofindexes]\cdot\D=\Di[i_1]+\ldots+\Di[i_{\lastone}]\in\F^{\cols}$, is a component-wise addition across selected records. We can accumulate such standard aggregate vectors and build an index of aggregate queries. Next, we define the simple index of aggregate queries in light of the simple index of queries~\cite{hafiz2017querying}.

\begin{definition}\label{def:simpleindex:aggregate}
A \textsl{simple index of aggregate queries} for a database $\D\in\F^{\rbys}$ is a $(0\mkern1mu,\mkern-1mu1)\mkern2mu$-matrix $\Pi\in\F^{\pbyr}$ in which each row is a \textsl{standard aggregate vector}.
\end{definition}
\vspace*{-0.20cm}
Now, we can quickly realize the following observations\emdash

\begin{observation}\label{obs:permute:aggregate}
If $\e[]\in\F^{\r}$ is a standard basis vector and $\Pi\in\F^{\rbyr}$ is a simple index of aggregate queries, then $\e[]\cdot\Pi$ is a standard aggregate vector $\aggregate[\setofindexes]\in\F^{\rows}$, namely, $\aggregate[\setofindexes]=\e[]\cdot\Pi$.
\end{observation}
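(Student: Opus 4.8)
The plan is to derive the statement directly from Definitions~\ref{def:aggregatebasis} and~\ref{def:simpleindex:aggregate} together with the single structural property that makes a standard basis vector useful: under left multiplication it selects one row of the matrix it multiplies. Everything then reduces to observing that the selected row of $\Pi$ is, by definition, a standard aggregate vector. This is exactly the aggregate-query analogue of the elementary fact that a standard basis vector times a permutation matrix returns a standard basis vector, so the argument is a definitional unwinding rather than anything requiring real computation.

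First I would fix notation by letting $k$ be the unique coordinate at which $\e[]$ carries its $1$, so that $\e[]=\e[k]$. Writing the entries of $\Pi$ as $\Pi_{ij}$, the $j$-th coordinate of the product is
\[
(\e[]\cdot\Pi)_j=\sum_{i}(\e[])_i\,\Pi_{ij}=\Pi_{kj},
\]
because $(\e[])_i=0$ for every $i\neq k$ while $(\e[])_k=1$. Hence $\e[]\cdot\Pi$ equals exactly the $k$-th row of $\Pi$.

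Next I would appeal to Definition~\ref{def:simpleindex:aggregate}: since $\Pi$ is a simple index of aggregate queries, each of its rows is by construction a standard aggregate vector, and in particular so is its $k$-th row. Taking $\setofindexes$ to be the set of column indices at which this row carries a $1$, Definition~\ref{def:aggregatebasis} identifies the $k$-th row with $\aggregate[\setofindexes]$. Chaining this with the row-selection computation of the previous step yields $\e[]\cdot\Pi=\aggregate[\setofindexes]\in\F^{\rows}$, which is precisely the claim, and the witnessing set $\setofindexes$ is exactly the support of the selected row.

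I do not anticipate a substantive obstacle, since the content is just the row-selection property combined with the two definitions; the only care required is bookkeeping. I would verify that the dimensions are compatible so that $\e[]\cdot\Pi$ is well defined (the length of $\e[]$ equals the number of rows of $\Pi$) and that the output dimension $\rows$ matches the column count of $\Pi$. I would also flag the minor edge case that Definition~\ref{def:aggregatebasis} insists on $\numberofones=\cardinality{\setofindexes}\geq 2$; this holds automatically here because Definition~\ref{def:simpleindex:aggregate} already requires every row of $\Pi$ to be a standard aggregate vector, so the selected row cannot degenerate to a single $1$ and thereby collapse to a mere basis vector.
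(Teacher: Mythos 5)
Your proof is correct and matches the paper's (implicit) reasoning: the paper states this observation without proof as something "quickly realized," precisely because it reduces to the row-selection property of a standard basis vector combined with Definition~\ref{def:simpleindex:aggregate}'s requirement that every row of $\Pi$ is a standard aggregate vector. Your definitional unwinding, dimension check, and handling of the $\numberofones\geq 2$ edge case are exactly the intended argument.
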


\begin{observation}\label{obs:pseudopermute:aggregate}
Let \smash{$\e[]\in\F^{\p}$} be a standard basic vector and let $\Pi\in\F^{\pbyr}$ be a simple index of aggregate queries with rank $\p \lastone$.
If \smash{$(\x[1],\vecQ[1]),\ldots,(\x[\ell],\vecQ[\ell])$} is a component-wise $(\t+1,\ell)\mkern2mu$-threshold sharing of $\e[]$, then \smash{$(\x[1],\vecQ[1]\cdot\Pi),\ldots,(\x[\ell],\vecQ[\ell]\cdot\Pi)$} is a component-wise $(\t+1,\ell)\mkern2mu$- and $(1,\ell)\mkern2mu$-threshold sharing of a standard aggregate vector $\aggregate[\setofindexes]\in\F^{\r}$; namely, of $\aggregate[\setofindexes]=\e[]\cdot\Pi$.\footnote{Specifically, the \smash{$\r-\p \lastone$} entries corresponding to all-\smash{$0$} columns in \smash{$\Pi$} are \smash{$(1,\ell)$}-threshold shares of \smash{$0$}; the remaining \smash{$\p \lastone$} entries are each \smash{$(\t+1,\ell)$}-threshold shares of either \smash{$0$} or \smash{$1$}.}
\end{observation}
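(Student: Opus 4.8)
The plan is to lean on the linearity of Shamir's scheme under right-multiplication by the public matrix $\Pi$ and then read off the two thresholds column by column. Spell out the input sharing first: a component-wise $(\t+1,\ell)$-threshold sharing of the standard basis vector $\e[]\in\F^{\p}$ is produced by drawing, for each coordinate $j$, a uniformly random polynomial $f_{j}$ of degree at most $\t$ with $f_{j}(0)=(\e[])_{j}$, and setting $\vecQ[k]=(f_{1}(\x[k]),\dots,f_{\p}(\x[k]))$. I would then compute the $m$-th coordinate of $\vecQ[k]\cdot\Pi$, namely $\sum_{j}\Pi_{j,m}f_{j}(\x[k])=g_{m}(\x[k])$ with $g_{m}(y):=\sum_{j}\Pi_{j,m}f_{j}(y)$. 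Two facts fall out: $g_{m}$ is a polynomial of degree at most $\t$ (a linear combination of such polynomials), and $g_{m}(0)=\sum_{j}\Pi_{j,m}(\e[])_{j}=(\e[]\cdot\Pi)_{m}$. By Observation~\ref{obs:permute:aggregate}, $\e[]\cdot\Pi$ is precisely the standard aggregate vector $\aggregate[\setofindexes]$, so each $g_{m}$ is a Shamir polynomial carrying the intended secret $(\aggregate[\setofindexes])_{m}$.

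For the $(\t+1,\ell)$ claim I would treat correctness and privacy separately, both consequences of the identity above. Correctness is immediate: every $g_{m}$ has degree at most $\t$ and value $(\aggregate[\setofindexes])_{m}$ at $0$, so any $\t+1$ of the evaluations $g_{m}(\x[k])$ pin down $g_{m}$ by Lagrange interpolation and hence recover $(\aggregate[\setofindexes])_{m}$. Privacy is inherited from the input sharing: each $\vecQ[k]\cdot\Pi$ is a deterministic function of public $\Pi$ and the single share $\vecQ[k]$, so any $\t$ output shares are a function of $\t$ input shares, which by hypothesis are distributed independently of $\e[]$ and therefore reveal nothing about $\aggregate[\setofindexes]=\e[]\cdot\Pi$.

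Next I would prove the footnote's refinement by partitioning the $\r$ coordinates according to whether column $m$ of $\Pi$ vanishes. If column $m$ is all-zero then $g_{m}\equiv 0$ is the degree-$0$ polynomial, and its evaluations form a $(1,\ell)$-threshold sharing of $0$; there are $\r-\p\lastone$ such coordinates. If column $m$ is nonzero, put $S_{m}=\{j:\Pi_{j,m}=1\}\neq\emptyset$ so that $g_{m}=\sum_{j\in S_{m}}f_{j}$; since the summands are independent and each has uniform coefficients in degrees $1,\dots,\t$, the same holds for $g_{m}$, which is therefore a genuine degree-at-most-$\t$ Shamir polynomial whose evaluations are a $(\t+1,\ell)$-threshold sharing of $g_{m}(0)=\Pi_{i^{\ast},m}\in\{0,1\}$, where $i^{\ast}$ is the support of $\e[]$. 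The rank-$\p\lastone$ hypothesis is what lets me count these nonzero columns as exactly $\p\lastone$ and treat them as independent, matching the split asserted in the footnote.

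The main obstacle is the tightness of the $(\t+1,\ell)$ guarantee on the nonzero columns: degree at most $\t$ alone is not enough, since I must show the top coefficients of each $g_{m}$ are genuinely uniform so that no coalition of $\t$ servers can recover the bit $\Pi_{i^{\ast},m}$. This forces the argument that every nonzero column inherits full randomness from at least one independent input polynomial $f_{j}$, and it is here that the rank condition does real work, ruling out degenerate or duplicated columns that could let the per-column sharings collapse to a lower effective threshold when viewed jointly. By contrast, the correctness step and the all-zero-column case are routine degree bookkeeping.
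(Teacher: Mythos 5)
Your proof is correct and takes essentially the approach the paper intends: the paper states this as an Observation with no explicit proof, the implicit argument being exactly yours\emdash linearity of Shamir sharing under right-multiplication by the public matrix $\Pi$, so each output coordinate is an evaluation of $g_m=\sum_j \Pi_{j,m}f_j$, with the footnote's dichotomy between all-$0$ columns (constant-zero, $(1,\ell)$-threshold shares) and nonzero columns (genuine degree-$\le\t$ Shamir polynomials with uniform non-constant coefficients). Your only slight overstatement is in the final paragraph: the per-column $(\t+1,\ell)$ guarantee does not need the rank hypothesis at all (any nonzero $(0,1)$-column yields a sum of independent Shamir polynomials, hence uniform top coefficients); the rank condition serves only the bookkeeping that there are exactly $\p \lastone$ nonzero and $\r-\p \lastone$ all-zero columns.
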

\vspace*{-0.20cm}
We define the $t$-privacy for indexes of aggregate queries as follows\emdash

\begin{definition}\label{def:tprivacy:aggregate}
Let $\D\in\F^{\rbys}$ and let each $\oldPi_1,\ldots,\oldPi_n$ be an index of aggregate queries\footnote{We didn't use ``simple'' here because each \smash{$\oldPi_{\j}$} can either be a simple index of aggregate queries or one of the more sophisticated types we introduce next. In particular, by permitting \smash{$\oldPi_{\j}$} to be various indexes of aggregate queries, we can use \definitionref{def:tprivacy:aggregate} to define privacy for all constructions.} for $\D$. Requests are \textsl{$\t$-private with respect to $\oldPi_1,\ldots,\oldPi_n$} if, for every coalition $\coalition\subseteq\ival{\ell}$ of at most $\t$ servers, for every record index $i\in\ival{\p}$, and for every index of aggregate queries $\oldPi\in\{\oldPi_1,\ldots,\oldPi_n\}$,
\begin{align*}
    \Pr\bigl[I=i\bigm\vert \Query[\coalition]=(\oldPi\semicolon\vecQ[\j_1],\ldots,\vecQ[\j_{\t}])\bigr]&=\Pr\bigl[I=i\bigm\vert E_{\oldPi}\bigr],
\end{align*}
where $I$ and $\Query[\coalition]$ indicate the random variables respectively describing the ``category'' index the user requests and the combined distribution of query vectors it sends to servers in $\coalition$ (including the ``hint'' that the query would go through $\oldPi$), and where $E_{\oldPi}$ is the event that the request is passed through $\oldPi$.
\end{definition}

From this definition, we understand that we can replace any ``index of queries'' matrices in any observations, definitions, and theories with the ``index of aggregate queries'' that proposed in~\cite{hafiz2017querying}. For example, we have\emdash

\begin{theorem}\label{THM:UARY:aggregate}
Fix $\u>1$ and $j\in\ival[0]{\u-1}$, and let $\oldPi=\bigl(\oldPi_1,\ldots,\oldPi_{\ell}\bigr)\in\bigl(\F^{\pbyr}\bigr){}^{\ell}$ be $\ell$ buckets of a $\u$-batch index of aggregate queries with bucket (server) coordinates $\x[1],\ldots,\x[\ell]\in\F\setminus\{0,\ldots,\u-1\}$. If \smash{$(\x[1],\vecQ[\j_1]),\ldots,(\x[\ell],\vecQ[\j_{\ell}])$} is a sequence of component-wise $(\t+1,\ell)\mkern2mu$-threshold shares of a standard basis vector $\e[]\in\F^{\p}$ encoded at $x=j$, then \smash{$(\oldPi,\x[1],\vecQ[\j_1]),\ldots,(\oldPi,\x[\ell],\vecQ[\j_{\ell}])$} is $\t$-private with respect to $\oldPi$.
\end{theorem}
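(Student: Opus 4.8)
The plan is to deduce $\t$-privacy (\definitionref{def:tprivacy:aggregate}) directly from the secret-sharing guarantee of the query vector, and to observe that the index of aggregate queries $\oldPi$ enters the servers' view only as the public hint, so it cannot by itself leak anything about the requested record index $I$. Concretely, I would fix a coalition $\coalition\subseteq\ival{\ell}$ with $\cardinality{\coalition}\leq\t$, write $\coalition=\set{\j_1,\ldots,\j_\t}$, and fix a target index $i\in\ival{\p}$. In $\Query[\coalition]=(\oldPi\semicolon\vecQ[\j_1],\ldots,\vecQ[\j_\t])$ the only part that depends on the user's choice is the tuple of share vectors $\vecQ[\j_1],\ldots,\vecQ[\j_\t]$: the hint $\oldPi$ is identical across all executions and is exactly what the conditioning event $E_{\oldPi}$ records. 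Hence it suffices to show that the distribution of $(\vecQ[\j_1],\ldots,\vecQ[\j_\t])$ does not depend on $i$.

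The heart of the argument is the standard Shamir independence property, adapted to the fact that the secret is planted at $x=j$ rather than at $x=0$. Because the sharing is component-wise, I would argue one coordinate at a time: for each $m\in\ival{\p}$, the $m$-th entries of the shares are the evaluations $f_m(\x[\j_1]),\ldots,f_m(\x[\j_\t])$ of a uniformly random polynomial $f_m$ of degree at most $\t$ subject to $f_m(j)=(\e[])_m$. Since the buckets satisfy $\x[\j_1],\ldots,\x[\j_\t]\in\F\setminus\set{0,\ldots,\u-1}$, every evaluation point differs from the encoding point $j$, so the $\t+1$ abscissae $j,\x[\j_1],\ldots,\x[\j_\t]$ are pairwise distinct. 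The key observation is then that, for every fixed value of the secret $f_m(j)$, the map sending the $\t$ remaining degrees of freedom of $f_m$ to $\bigl(f_m(\x[\j_1]),\ldots,f_m(\x[\j_\t])\bigr)$ is a bijection (an invertible $\t\times\t$ Vandermonde-type system, using distinctness of the points), so these $\t$ evaluations are uniform over $\F^{\t}$ and statistically independent of $(\e[])_m$; a coalition of size strictly below $\t$ only sees a marginal of this and is therefore likewise independent.

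Combining the coordinates and using that the $\p$ defining polynomials are drawn independently, the joint view $(\vecQ[\j_1],\ldots,\vecQ[\j_\t])$ is uniform over its range and independent of the basis vector $\e[]$, hence of the requested index $i$. A short Bayes'-rule computation then closes the argument: writing $v$ for the observed value of $\Query[\coalition]$, the quantity $\Pr[\Query[\coalition]=v\mid I=i,E_{\oldPi}]$ takes a common value $c(v)$ for every $i$, so the marginal $\Pr[\Query[\coalition]=v\mid E_{\oldPi}]$ equals the same $c(v)$, and the posterior $\Pr[I=i\mid \Query[\coalition]=v]$ collapses to $\Pr[I=i\mid E_{\oldPi}]$, which is precisely the equality demanded by \definitionref{def:tprivacy:aggregate}.

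I expect the only genuinely delicate point to be the bookkeeping around encoding at $x=j$, namely confirming that the aggregate structure of $\oldPi$ (its rows being standard aggregate vectors with several $1$s, rather than permutation rows) is never used, so that the proof is verbatim the one for ordinary indexes of queries in \cite{hafiz2017querying}. Privacy is inherited entirely from the $(\t+1,\ell)$-threshold sharing of the query and is oblivious to whatever fixed public matrix the shares are later multiplied by; the hypothesis $\x[i]\in\F\setminus\set{0,\ldots,\u-1}$ is exactly what shields the secret slot $x=j$ from every server coordinate, and is the single place where the $\u$-batch structure is invoked.
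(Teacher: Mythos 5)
Your proof is correct, and it takes a genuinely different (more self-contained) route than the paper's. The paper's proof fixes a coalition of $\t$ servers, introduces random variables $J$ (the basis-vector index) and $K$ (the encoding abscissa), expands $\Pr[I=\setofindexes\mid\cdot\,]$ on both sides by the law of total probability over $K=k$, rewrites each term as the event $\e[J]\cdot\pi_k=\e[\setofindexes]$ where $\pi_k$ is the simple index recovered from the batch at $x=k$, and then strips the conditioning on the coalition's shares term by term by invoking the $\t$-privacy of Shamir's scheme as a black box. You instead prove that underlying property from scratch: coordinate-wise, the $\t$ evaluations of a uniformly random degree-at-most-$\t$ polynomial at coalition points distinct from $j$ are uniform on $\F^{\t}$ (the Vandermonde bijection), so the coalition's joint view is independent of the pair $(\e[],j)$, and Bayes' rule finishes. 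Your uniformity claim is in fact slightly stronger than what the paper cites: independence from the encoding point $j$ as well as from the secret is exactly what licenses collapsing the paper's sum over $K$ into one step, and it is also what silently justifies the paper's use of the same factor $\Pr[K=k\mid E_{\oldPi}]$ in both of its decompositions. What the paper's route buys is explicit bookkeeping of the batch structure (which $\pi_k$ the request routes through); what your route buys is an elementary argument that pinpoints where the hypothesis $\x[1],\ldots,\x[\ell]\in\F\setminus\{0,\ldots,\u-1\}$ is used, namely to keep the secret abscissa out of every server's evaluation point. One point to make explicit in a final write-up: the event $I=\setofindexes$ may correspond to several $(J,K)$ pairs; since your view is uniform conditioned on each such pair, the mixture is still uniform, so your Bayes step is sound\emdash this is precisely the work the paper's explicit sum over $K$ performs.
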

\vspace*{-0.25cm}

Where we define $\u$-batch index of aggregate queries as follows\emdash

\begin{definition}\label{def:batchindex:aggregate}
Say $\u>1$ and let $\x[1],\ldots,\x[\ell]\in\F\setminus\{0\mkern0.5mu,\ldots,\u-1\}$ be pairwise distinct scalars. 
A sequence $\oldPi_1,\ldots,\oldPi_{\ell}\in\F^{\pbyr}$ of matrices is a \textsl{$\u$-batch index of aggregate queries} for Goldberg's IT-PIR with bucket coordinates $\x[1],\ldots,\x[\ell]$ if (i)~$\oldPi_{i_1}\neq\oldPi_{i_2}$ for some $i_1,i_2\in\ival{\ell}$, and (ii)~for each $j=0\mkern1.5mu,\ldots,\u-1$,
\begin{align*}
\pi_j\coloneqq\nsum[1.35]_{i=1}^{\ell}\oldPi_i\cdot
\text{\larger[0.5]$\displaystyle\bigl(\tfrac{j\mkern3mu-\mkern3mux_1}{x_i\mkern3mu-\mkern3mux_1}\bigr)\cdots\bigl(\tfrac{j\mkern3mu-\mkern3mux_{i-1}}{x_i\mkern3mu-\mkern3mux_{i-1}}\bigr)\bigl(\tfrac{j\mkern3mu-\mkern3mux_{i+1}}{x_i\mkern3mu-\mkern3mux_{i+1}}\bigr)\cdots\bigl(\tfrac{j\mkern3mu-\mkern3mux_{\ell}}{x_i\mkern3mu-\mkern3mux_{\ell}}\bigr)$}
\end{align*}
is a simple index of aggregate queries.
\end{definition}

\begin{figure}[t]
\begin{center}
\begin{align*}\tabcolsep=4pt\phantom{\D\coloneqq}&\left.~{\color{black}
\scalebox{0.5}{\begin{tabular}{p{3cm}p{1.8cm}p{1.8cm}p{1.5cm}p{2.8cm}p{3cm}}
\textsmaller{\texttt{Hospitalization\_ID}}&
\textsmaller{\texttt{Patient\_ID}}&
\textsmaller{\texttt{Admit\_date}}&
\textsmaller{\texttt{Gender\_ID}}&
\textsmaller{\texttt{Days\_hospitalized}}&
\textsmaller{\texttt{State\_ID}}\\
\end{tabular}}}\right.\\[-1.35ex]
\D\coloneqq&\left[
\scalebox{0.5}{\begin{tabular}{p{3cm}p{1.8cm}p{1.8cm}p{1.5cm}p{2.8cm}p{1.5cm}}
0001   & 1    &01-02-2022&   1 (Male)&10&2 (OR)\\
 0002&   2  & 01-04-2022   &1 (Male)& 2&1 (CA)\\
 0003 &3 & 08-06-2022&  2 (Female)&14&3 (WA)\\
 0004    &1 & 07-23-2022&  1 (Male)&2&2 (OR)\\
 0005&   3  & 09-01-2022&2 (Female)&7&3 (WA)\\
 0006& 4  & 05-14-2022   &3 (Other) &2&1 (CA)\\
\end{tabular}}\right]
\end{align*}
\vspace*{-0.25cm}\caption*{Small sample hospitalization record database.}\label{fig:toydatabase}
\end{center}
\vspace*{-0.65cm}
\end{figure}

\vspace*{-0.25cm}
\begin{algorithm}
\caption{General Algorithm For Our Protocol}\label{alg:cap}
\textbf{STEP 1:} Secret shares of client query vector generated\\
\textbf{STEP 2:} Secret shares of client query vector and keyword hint to select batch index of aggregate queries shared with servers\\
\textbf{STEP 3:} Keyword hint isolates required batch index of aggregate queries in each server \\
\textbf{STEP 4:} Secret share of client query vector multiplied with the selected batch index of aggregate queries in each server \\
\textbf{STEP 5:} Result obtained in the previous step multiplied with database matrix in each server \\
\textbf{STEP 6:} Results from each server sent back to the user \\
\textbf{STEP 7:} Results obtained from each server used to reconstruct desired query result \\
\end{algorithm}
\vspace*{-0.25cm}

\vspace*{-0.15cm}
\subsection{Various Indexes of Aggregate Queries}
\vspace*{-0.15cm}
The index of aggregate queries scheme facilitates the retrieval of results for several commonly used aggregate queries with a single interaction with the servers, thus making it a powerful tool to build an aggregate query framework around. It can support SUM, COUNT, MIN/MAX Histogram, and MEAN queries. Our scheme does not require the user to be aware of any details about the database being queried or the indexes of aggregate queries that are required for the aggregation (other than some semantic understanding for keyword-based indication of information to be aggregated). 
For each round of interaction (i.e., a client sends a query to the server, and the server responds back to it), the upload cost is O(p), (p is the number of search terms and the height of the index matrix), and the download cost is O(s) (s is the number of words of a record) regardless of the query type and how many records are retrieved. We highlight examples of aggregate queries using a sample hospitalization records database $\D$ as the target for the queries.

\subsubsection{Indexes of SUM Queries}\label{sss:sumquery}
To execute a SUM query in the form: {\fontfamily{cmr}\selectfont\color{gray!75!black}\small SELECT SUM(days\_hospitalized) FROM daily\_patient\_records WHERE patient\_id = 3. }\normalfont
An index of aggregate queries is constructed based on the daily\_patient\_records table that indicates each day of hospitalization for each possible value of A and is shared with each server containing a copy of the database. It is hosted by the server alongside indexes of aggregate queries for other potential searches as well. The client constructs an a-dimensional query vector, where a is the number of possible values for the variable in the filter clause. The scheme then follows the steps described in Algorithm 1 to generate the desired SUM. The index of aggregate queries matrix for serving this query is:
\vspace*{-0.10cm}
\begin{align*}
\Pi[patient]\coloneqq
\!\begin{array}{l}
\color{black}{\fontsize{7}{7}\selectfont \texttt{(Patient 1)}} \\
\color{black}{\fontsize{7}{7}\selectfont \texttt{(Patient 2)}} \\
\color{black}{\fontsize{7}{7}\selectfont \texttt{(Patient 3)}} \\
\color{black}{\fontsize{7}{7}\selectfont \texttt{(Patient 4)}}
\end{array}\!
\left[\!\begin{array}{llllll}
1&0&0&1&0&0\\
0&1&0&0&0&0\\
0&0&1&0&1&0\\
0&0&0&0&0&1
\end{array}\!\right]
\end{align*}
Similarly, to find the total number of days of hospitalization for male-identifying patients who were admitted before June $2022$, we would require an index of aggregate queries of the form:
\vspace*{-0.30cm}
\begin{align*}
\Pi[duration]\coloneqq 
\!\begin{array}{l}
\color{black}{\fontsize{7}{7}\selectfont \texttt{(Male)}} \\
\color{black}{\fontsize{7}{7}\selectfont \texttt{(Female)}} \\
\color{black}{\fontsize{7}{7}\selectfont \texttt{(Other)}}
\end{array}\!
\left[\!\begin{array}{llllll}
1&1&0&0&0&0\\
0&0&0&0&0&0\\
0&0&0&0&0&1
\end{array}\!\right]
\end{align*}

\vspace*{-0.25cm}
\subsubsection{Indexes of COUNT and Histogram Queries}\label{sss:counthistogramquery}
COUNT queries work similarly to SUM queries but might require some preprocessing or post-processing depending on the query's field. We start with a relatively simple example of counting the number of patients who identify as female. The SQL query would look something like:
{\fontfamily{cmr}\selectfont\color{gray!75!black}\small SELECT COUNT(*) FROM daily\_patient\_records WHERE gender\_id = G. }\normalfont

Typically, most databases are designed to have a constant integer, the value stored in the id column for a specific categorical variable, whose actual value is stored in a master table. Filtering on the id column and counting the rows would provide the desired response for the query. For this specific query, the client constructed a g-dimensional query vector, where g is the number of different possible values of gender id. The secret shares of the query vector for the gender id corresponding to females are sent to the different servers, along with a keyword indicating aggregation over gender. Each server hosts an index of aggregate queries associated with gender and a copy of the database. The secret shares filter the index of aggregate queries via multiplication, the product of their multiplication is then multiplied by the database to return shares from each server back to the user. Upon reconstruction, the user will receive an aggregation over the gender\_id column, which, when divided by the gender\_id corresponding to the desired gender, yields the count value, which in this case would be the count of hospitalized patients who identify as female. The following is the index of aggregate queries matrix for serving this query:
\vspace*{-0.15cm}
\begin{align*}
\Pi[population]\coloneqq 
\!\begin{array}{l}
\color{black}{\fontsize{7}{7}\selectfont \texttt{(Male)}} \\
\color{black}{\fontsize{7}{7}\selectfont \texttt{(Female)}} \\
\color{black}{\fontsize{7}{7}\selectfont \texttt{(Other)}}
\end{array}\!
\left[\!\begin{array}{llllll}
1&1&0&1&0&0\\
0&0&1&0&1&0\\
0&0&0&0&0&1
\end{array}\!\right]
\end{align*}
Some preprocessing is required on the database to extend the count operation to count using categorical filter variables. Categorical fields operated on by count queries should be identified, and a separate master table must be constructed with integer-type id columns for each categorical variable to be counted. These columns are populated with arbitrary values for each level of each variable. The master table can be used to generate a denormalized view on which the count query can operate. This preprocessing is required only if an id column does not exist for a categorical variable that needs to be counted. However, most databases are designed such that categorical variables of importance come with associated id columns for usage in fact tables. Consider the example of constructing a query to find the total number of hospitalizations in the state of California. For the execution of this query, if a state\_id column is unavailable in the database, a master table must be constructed where one of the columns is state\_id, and California is assigned an arbitrary integer value I. The SQL query executed on a denormalized view of the database would be: {\fontfamily{cmr}\selectfont\color{gray!75!black}\small SELECT COUNT(*) FROM hospital\_admission\_records WHERE state\_id = I. }\normalfont
The procedure for obtaining the result of this query is identical to that of the previous example, and follows Algorithm 1. The necessary index of aggregate queries matrix is:
\vspace*{-0.30cm}
\begin{align*}
\Pi[state]\coloneqq
\!\begin{array}{l}
\color{black}{\fontsize{7}{7}\selectfont \texttt{(California)}} \\
\color{black}{\fontsize{7}{7}\selectfont \texttt{(Oregon)}} \\
\color{black}{\fontsize{7}{7}\selectfont \texttt{(Washington)}}
\end{array}\!
\left[\!\begin{array}{llllll}
0&1&0&0&0&1\\
1&0&0&1&0&0\\
0&0&1&0&1&0
\end{array}\!\right]
\end{align*}
Histograms work similarly to count queries, with the difference being that histograms require counts of every level of a categorical variable rather than the count of a single value. This can be demonstrated with the help of the examples used for explaining the mechanism for the COUNT query. In the first example, instead of counting the number of hospitalized patients who identify as female, the user might instead want to construct a histogram of hospitalized patients by gender. The corresponding SQL query would be: {\fontfamily{cmr}\selectfont\color{gray!75!black}\small SELECT COUNT(*) FROM daily\_patient\_records GROUP BY gender\_id. }\normalfont

This would simply translate to a set of count queries, each counting a possible value of the gender\_id field. Suppose the number of different possible gender\_ids is g. In that case, g query vectors are batched together instead of just a single query vector, as in the case of obtaining the number of hospitalized patients who identify as female. The protocol then follows Algorithm 1 in order to compute the counts required to construct the histogram.

Similarly, if a histogram is required for a categorical variable with non-numeric values, the same preprocessing is required, as in the case of the second example for COUNT queries. The entire process is identical, except for the construction of the query vector. For the example of obtaining the histogram of hospitalized patients by state, the construction of the query vector would involve identifying the number of different possible values of state\_id and then batching together as many queries, one for each state\_id. The secret shares of this batched query vector would be shared with the servers along with a keyword suggesting that the target index of aggregate queries is the one associated with the state. This would serve the SQL query : {\fontfamily{cmr}\selectfont\color{gray!75!black}\small SELECT COUNT(*) FROM hospital\_admission\_records GROUP BY state\_id.}
\normalfont

\subsubsection{Indexes of MIN and MAX Queries:}
Min and Max aggregations can be achieved using simple indexes of queries. An example of a query for demonstrating the usage of MIN and MAX functions is a user request for the oldest date of hospitalization in California and, conversely, the most recent date of hospitalization in California. The corresponding SQL queries would be: {\fontfamily{cmr}\selectfont\color{gray!75!black}\small SELECT MIN(date\_time) FROM hospital\_admission\_records WHERE state\_id = I and SELECT MAX(date\_time) 
hospital\_admission\_records WHERE state\_id = I. }\normalfont

To use MIN or MAX in conjunction with GROUP BY, for instance, in a situation where the user is curious about the state with the fewest hospital admissions, they can obtain the response of a corresponding histogram query on the GROUP BY field and perform a sorting operation to yield the desired MIN or MAX.

The index of queries matrices required for serving the MAX and MIN SQL queries mentioned earlier are:
\begin{align*}
\Pi[latest admission]\coloneqq
\!\begin{array}{l}
\color{black}{\fontsize{7}{7}\selectfont \texttt{(California)}} \\
\color{black}{\fontsize{7}{7}\selectfont \texttt{(Oregon)}} \\
\color{black}{\fontsize{7}{7}\selectfont \texttt{(Washington)}}
\end{array}\!
\left[\!\begin{array}{llllll}
0&1&0&0&0&0\\
1&0&0&0&0&0\\
0&0&1&0&0&0
\end{array}\!\right]
\end{align*}
\vspace*{-0.35cm}
\begin{align*}
\Pi[oldest admission]\coloneqq
\!\begin{array}{l}
\color{black}{\fontsize{7}{7}\selectfont \texttt{(California)}} \\
\color{black}{\fontsize{7}{7}\selectfont \texttt{(Oregon)}} \\
\color{black}{\fontsize{7}{7}\selectfont \texttt{(Washington)}}
\end{array}\!
\left[\!\begin{array}{llllll}
0&0&0&0&0&1\\
0&0&0&1&0&0\\
0&0&0&0&1&0
\end{array}\!\right]
\end{align*}


\subsubsection{Indexes of Mean Queries:}
Calculating a mean requires computing a sum and a count for the same column. To retrieve the result of a mean query from the database, a user would have to batch [\hyperref[ss:batchingindexes]{4.3}] two queries, one each for computing a summation and the other for fetching the count, and the corresponding mechanisms are as described previously. For example, a user might be interested in the average age of hospitalized patients in California, and let us assume that the State\_ID for California in our example database is 1. The corresponding SQL query would be: {\fontfamily{cmr}\selectfont\color{gray!75!black}\small SELECT MEAN(age) FROM daily\_patient\_records WHERE state\_id = I.}
\normalfont 

To obtain this information, the query can be deconstructed into two queries: {\fontfamily{cmr}\selectfont\color{gray!75!black}\small SELECT SUM(age) FROM daily\_patient\_records WHERE state\_id = I 
 and SELECT COUNT(*) FROM daily\_patient\_records WHERE state\_id = I.}
\normalfont

In our example, I = 1. For both queries, an i-dimensional query vector must be constructed, where i is the number of possible values for the state\_id field, and the protocol would then proceed to perform the steps in Algorithm 1.
The response received from the servers for these two batched queries would provide the user with the sum of the ages of the patients hospitalized in California and the number of patients hospitalized in California. Thus, the user can now easily perform a local division operation with the response of the SUM query and that of the COUNT query to obtain the desired mean age of the patients hospitalized in California.
\vspace*{-0.35cm}

\vspace*{-.15cm}
\subsection{Batch Indexes of Aggregate Queries}\label{ss:batchingindexes}
\vspace*{-0.15cm}
Here, we showcase how we could batch $\u$ number of indexes of aggregate queries 
using the hospitalization database $\D$, which holds six records. Its schema includes the attributes of hospitalization ID, patient ID, admit date, gender ID, number of days hospitalized, and state ID. Suppose we target to build a PIR set up to support a user who is interested in either ``to find the total number of days of hospitalization for male-identifying patients who were admitted before June $2022$,'' or ``the count of hospitalized patients who identify as female.'' To facilitate such interesting (but complicated) aggregate queries, we recall two simple indexes of aggregate queries, $\Pi[duration]$ and $\Pi[population],$ described in~\sectionref{sss:sumquery}  and~\sectionref{sss:counthistogramquery}, respectively.

Each row of $\Pi[duration]$ and $\Pi[population]$ is a standard aggregate query that points to patients' gender. For instance, the first row of $\Pi[duration]$ adds the values of the number of hospitalization days of \emph{male} patients admitted before June 2022. Now we can batch them by the process of \sectionref{s:batch-coding}, i.e., the $2$-ary polynomial batch codes, and produce:
\begin{align*}
\Pi[duration,population](x) \coloneqq\left\lgroup\!\begin{array}{cccccc}
1   &   1   &   0   &   x   &   0   &   0\\
0   &   0   &   x   &   0   &   x   &   0\\
0   &   0   &   0   &   0   &   0   &   1
\end{array}\!\right\rgroup\in\bigl(\F[x]\bigr)^{3\times6}.
\end{align*}
We can easily check that evaluating $\Pi[duration,population](x)$ component-wise at $x=0$ and $x=1$ recovers $\Pi[duration]$ and $\Pi[population]$, respectively. Each of $\ell$ servers holds an evaluation of $\Pi[duration,population](x)$ at $x=\x$, where $\x[1],\ldots,\x[\ell]\in\F\setminus\{0\mkern0.5mu,\mkern-1mu1\}$ are arbitrary, pairwise distinct scalars. (Note that each such evaluation, a bucket held by a server, is a $2$-batch index of aggregate queries per \definitionref{def:batchindex:aggregate}.) If the client likes to know the count of hospitalized patients who identify as female, she encodes the standard basis vector $\e[1]\in\F^3$ at $x=1$. Moreover, we can enable the client to fetch aggregate information of both of them only in a single round by the batch query mechanism. In that case, she has to encode the same standard basis vector $\e[1]$ at $x=0$ and $x=1$ at once. Thus, we can define the indexes of $k$-batch aggregate queries as follows:

\begin{definition}\label{def:indexbatch:aggregate}
Fix $k>1$ and let $\x[1],\ldots,\x[\ell]\in\F\setminus\{0\mkern0.5mu,\ldots,k-1\}$ be pairwise distinct scalars. A sequence $\oldPi_1,\ldots,\oldPi_{\ell}\in\F^{\pbyr}$ of matrices is an \textsl{index of $k$-batch aggregate queries} for Goldberg's IT-PIR with bucket coordinates $\x[1],\ldots,\x[\ell]$ if, for each $i=1,\ldots,\p$, the matrix
\begin{align*}
    \left\lgroup\ensuremath{\!\!\!\begin{array}{c}\\[-3ex]\pi_{i0}\\[-0.75ex]\vdots\\[-0.25ex]\pi_{i(k-1)}\\[0.25ex]\end{array}\!\!\!}\right\rgroup
\end{align*}
is a simple index of aggregate queries matrix, where, for each $j\in\ival[0]{k-1}$,
\begin{align*}
\pi_{ij}\coloneqq\e\mkern-1.5mu\cdot\mkern-3.5mu\nsum[1.35]_{n=1}^{\ell}\mkern-1.5mu\oldPi_n\cdot\text{\larger[0.5]$\displaystyle\bigl(\mkern-0.5mu\tfrac{j\mkern1mu-\mkern2mux_1}{x_n\mkern1.5mu-\mkern1.5mux_1}\mkern-0.5mu\bigr)\mkern-1.5mu\cdots\mkern-1.5mu\bigl(\mkern-0.5mu\tfrac{j\mkern1mu-\mkern2mux_{n-1}}{x_n\mkern1.5mu-\mkern1.5mux_{n-1}}\mkern-0.5mu\bigr)\mkern-1mu\bigl(\mkern-0.5mu\tfrac{j\mkern1mu-\mkern2mux_{n+1}}{x_n\mkern1.5mu-\mkern1.5mux_{n+1}}\mkern-0.5mu\bigr)\mkern-1.5mu\cdots\mkern-1.5mu\bigl(\mkern-0.5mu\tfrac{j\mkern1mu-\mkern2mux_{\ell}}{x_n\mkern1.5mu-\mkern1.5mux_{\ell}}\mkern-0.5mu\bigr)$}.
\end{align*}
\end{definition}
\vspace*{-0.20cm}
We demonstrate the end-to-end pipeline for our novel IT-PIR pipeline that utilizes the indexes of aggregate queries data structures that we introduce in \figureref{fig:schematic}. Note that the security analysis (described in Appendix~\ref{s:apndx:securityanalysis}) and complexity analysis coincide with that of Hafiz-Henry~\cite{hafiz2017querying}. Thus, with similar $t$-privacy and communication complexity, we achieve a new set of aggregate queries.
\vspace*{-0.25cm}
\section{Evaluation Results}\label{s:eval}
\vspace*{-0.1cm}
\subsection{Benchmarking Experiments}
\vspace*{-0.2cm}

To assess our protocol's query throughput and scaling capabilities, we designed several experiments to benchmark the protocol's performance. These experiments test the first $4$ steps of the proposed algorithm. Benchmarking was performed on a server with $256$ GB RAM and $5$ GPUs. Similar to \cite{hafiz2017querying}, we implement Barrett Reduction \cite{barrett1986implementing} for modulo arithmetic operations. To save storage space, matrices were reduced to compressed column storage representation \cite{duff1989sparse}. For every experiment, the mean value of $100$ trials was reported for each query throughput value. Error bars were negligible for each data point. 
Vector-sparse Matrix multiplication (VspM) throughput on GPU measures how many queries (clients) can be multiplied by an evaluation of the batch indexes of aggregate queries. This measurement includes the time to copy to and from the local host to the GPU device. If a client wants to batch multiple queries together, it does that itself and sends an evaluation of the polynomial-encoded query vector to a server. The server can multiply different query vectors from different clients with its stored evaluation of the batch indexes of aggregate queries, i.e., the GPU throughput of the server. This relationship is linear, and the inverse of the throughput will give the latency per client (query). The server doesn't have to wait for more queries from more clients.

The purpose of the first experiment (as shown in \figureref{fig:benchvaryingr}) is to observe the impact of increasing the number of rows in the database while other variables are maintained at constant values. We vary the number of rows, $\r$, in the database (which is equal to the number of columns in the index of aggregate queries) from $2^{14}$ to $2^{24}$ while keeping the number of rows in the index of queries, $\p$ constant at $2^{16}$ and keeping the number of indexes of aggregate queries batched ($\u$) constant at $6$. Additionally, the number of rows over which aggregation occurs is also fixed at 1. We derive the trend for $128, 256$, and $512$ bit moduli. The general trend observed is an expected monotonic decrease in the throughput with an increasing number of rows in the database. Increasing the bit size of the modulus results in a depreciation of the query throughput attained in most cases. We observe that for a database with over 16 million rows, our protocol is capable of supporting around 4000 queries per second, for modulus with bit sizes of 128, 256 and 512.

\begin{figure}[h]
\includegraphics[width=8.5cm]{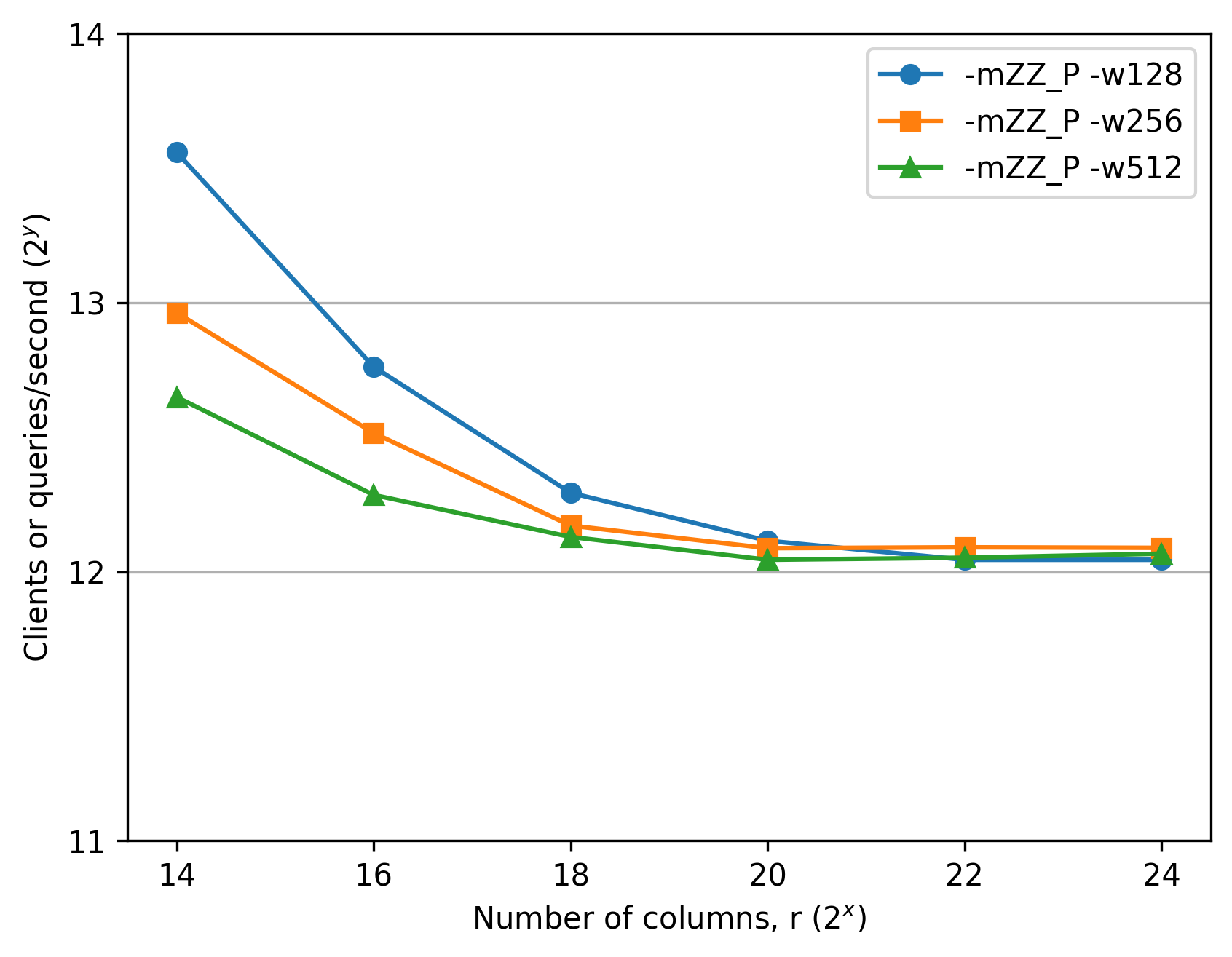}\vspace*{-0.25cm}
\caption{Query throughput with a varying number of rows in the database. The number of rows in the index of aggregate queries, number of rows for aggregation, and number of indexes of aggregate queries batched are kept constant}
\label{fig:benchvaryingr}
\vspace*{-0.65cm}
\end{figure}

The second experiment is very similar to the first, but this time we vary the number of rows in the index of aggregate queries instead of varying the number of rows in the database. We vary the number of rows, $\p$, in the index of aggregate queries from $2^{1}$ to $2^{17}$ while keeping the number of rows in the database, $\r$ constant at $2^{20}$ and keeping the number of indexes of aggregate queries batched ($\u$) constant at $6$. The number of rows over which aggregation occurs is fixed at $2^{4}$. There seems to be no major impact of varying the number of rows in the index of aggregate queries until varied from $2^{16}$ to $2^{17}$, at which point, we observe a drop in the query throughput. Increasing the bit size of the modulus results in a depreciation of the query throughput consistently.

\begin{figure}[h]
\includegraphics[width=8.5cm]{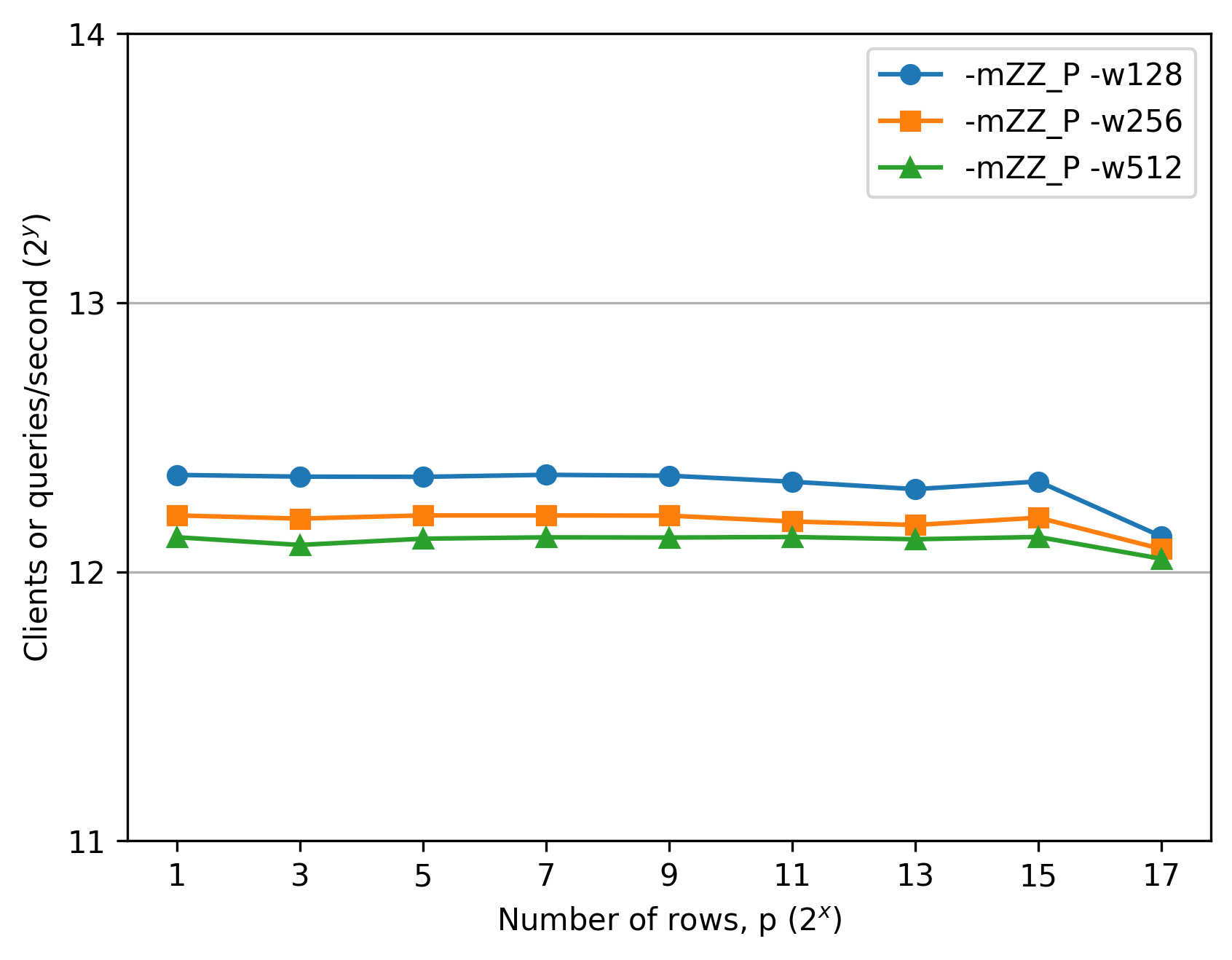}\vspace*{-0.25cm}
\caption{Query throughput with a varying number of rows in the index of aggregate queries. The number of rows in the database, number of rows for aggregation, and number of indexes of aggregate queries batched are kept constant}
\label{fig:benchvaryingp}
\vspace*{-0.57cm}
\end{figure}

The third experiment demonstrates the capability of the protocol to perform aggregations over varying numbers of rows. To evaluate how well the aggregation scales, we maintain the number of rows, columns, and indexes of aggregate queries batched together fixed at $\p=2^{14}$, $\r=2^{16}$, and $\u=6$ respectively. The number of rows over which aggregation occurs varies from $2^{1}$ to $2^{11}$. 
The experiment is performed with a 256-bit modulus.
Other than a single unexpected spike during scaling (when the number of rows over which aggregation occurs is increased to $2^4$ from $2^3$), increasing the number of rows over which aggregation occurs progressively reduces the query throughput until it plateaus out as shown in \figureref{fig:benchvaryingu}. For queries that aggregate over up to 512 rows in the database, the protocol can support around $4{,}000$ queries per second, which is about half the throughput observed for non-aggregate queries on an identical database with a modulus of the same bit size.

\begin{figure}[h]
\includegraphics[width=8.5cm]
{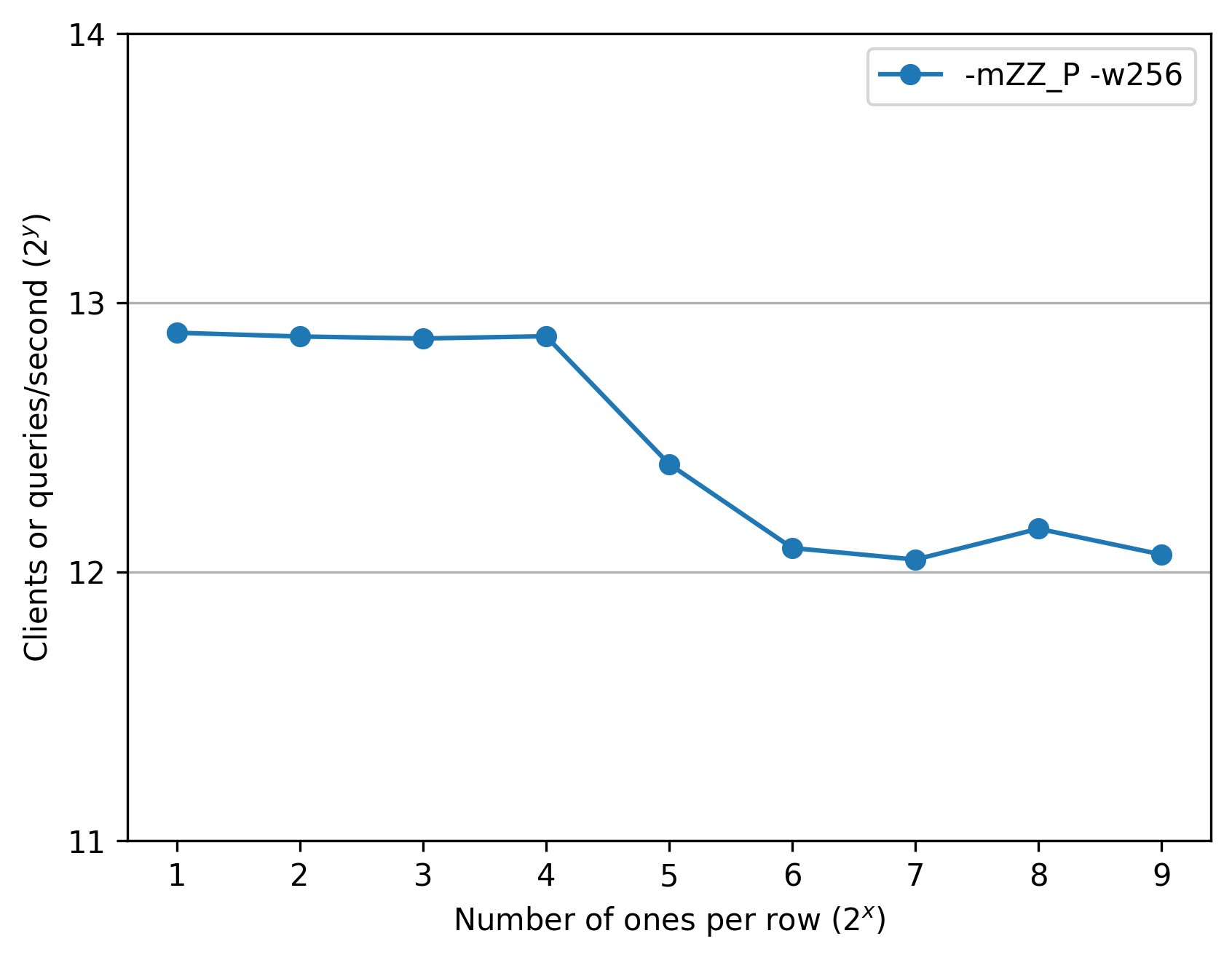}\vspace*{-0.25cm}
\caption{Query throughput with a varying number of rows for aggregation. The number of rows in the index of aggregate queries, the number of rows in the database, the number of indexes of aggregate queries batched, and the bit size of the modulus are kept constant.}
\label{fig:benchvaryingi}
\vspace*{-0.65cm}
\end{figure}

Our final experiment determines the capability of our protocol to scale in terms of the number of queries it can serve by batching several indexes of aggregate queries. For this experiment, the number of rows and columns are fixed at $\p=2^{14}$ and $\r=2^{16}$, respectively, and the number of rows over which aggregation is performed is maintained at $2^4$. The number of indexes being batched, $\u$, is, however, varied from $2^1$ to $2^{10}$. The experiment is performed with a $256$-bit modulus–results are plotted in \figureref{fig:benchvaryingp}. We observe that the time required to perform the batching operation monotonically increases with an increase in the number of files being batched as expected. The time required to batch a thousand files is about $32$ seconds.

\begin{figure}[h]
\includegraphics[width=8.5cm]
{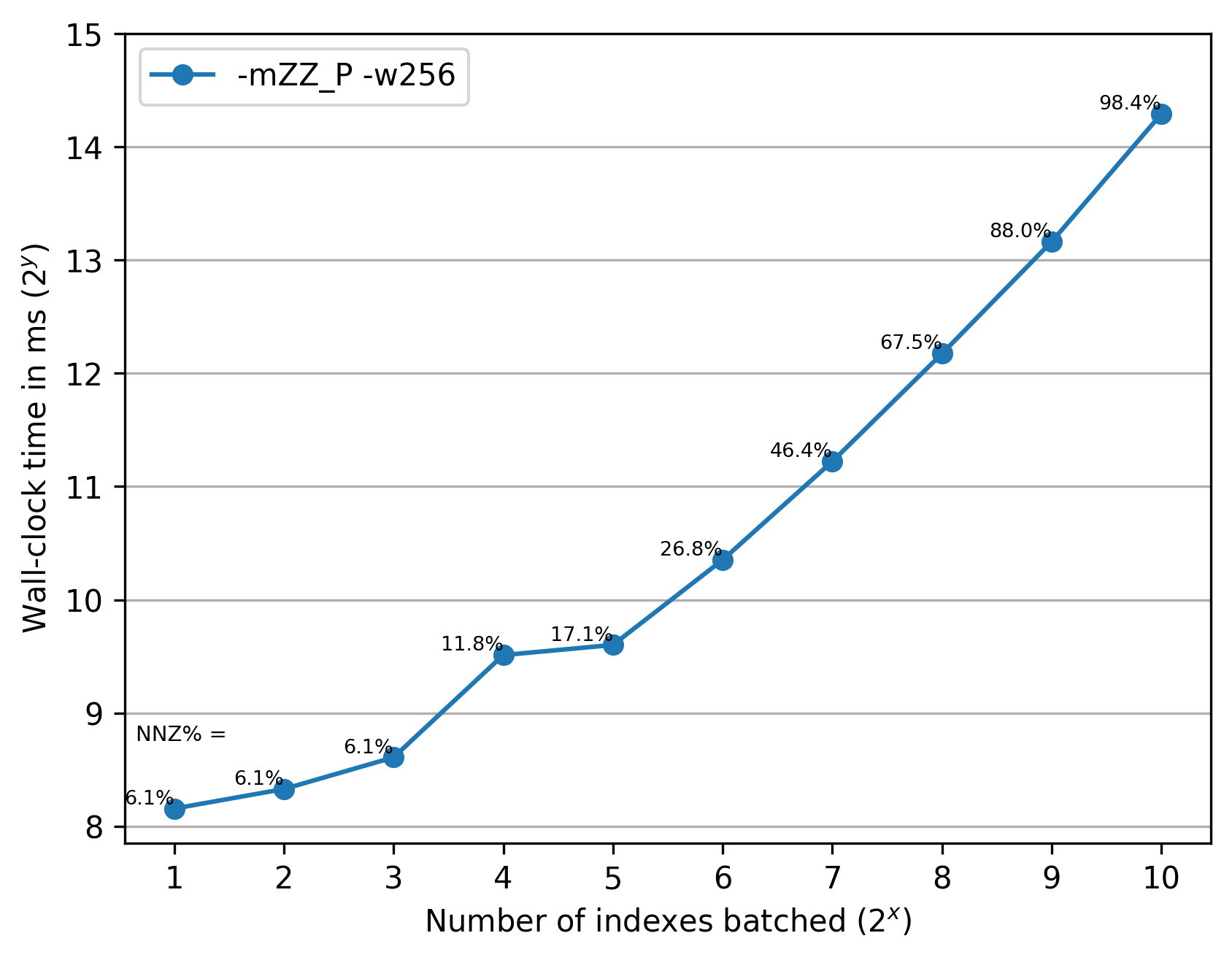}\vspace*{-0.25cm}
\caption{Batching time with a varying number of indexes of aggregate queries batched. The number of rows in the index of aggregate queries, number of rows in the database,  number of rows for aggregation, and bit size of modulus are kept constant. The NNZ percentages refer to the number of non-zero columns in the batched indexes of aggregate queries, and this percentage becomes higher as progressively larger numbers of indexes are batched.}
\label{fig:benchvaryingu}
\vspace*{-0.75cm}
\end{figure}

\subsection{Practical Applications}
\vspace*{-0.15cm}

To demonstrate the practical application of our protocol in real-world situations, we designed three case studies based on real databases. The case studies are primarily concerned with steps $5$ to $7$ of the protocol. The first case study utilizes a medical database, MIMIC3 \cite{johnson2016mimic}, which collects clinical data. Medical datasets are often sensitive and require cautious handling and querying. Additionally, an untrusted database owner who can observe a user's query to a clinical database might be able to infer information about the user based on the information they are looking for in the database. To prevent this, our protocol can be implemented with the medical database. The second case study demonstrates the applicability of our protocol to a gigantic social media database constructed from Twitter. A social media platform query can reveal various information about the user to a malicious database owner who can observe user queries.
Additionally, databases of popular social media platforms tend to host enormous amounts of user data and are thus ideal databases for testing how well our protocol scales and performs on large volumes of data. Our third and final case study compares the performance of our protocol to the most similar instance of our work, a protocol called Splinter, which is also capable of performing private aggregate queries. To perform this comparison, we recreate the same queries that Splinter performs on a synthetic Yelp database, which hosts restaurant reviews.

\vspace*{-0.25cm}
\subsubsection{Case study I: MIMIC3 Medical Database}
Mimic 3 is a freely available clinical database comprising de-identified data of a large number of patients admitted to the Beth Israel Deaconess Medical Center between 2001 and 2012. The database contains data in various formats, but our case study mainly deals with tabular data. Numerous tables within the database outline details about the hospitalization of patients and the details of the care that they received during their stay. We design several practical aggregate queries on some of these data. Specifically, we use the ``Admissions'' and ``Prescription'' data for our queries. 
The queries performed can be broadly divided into two groups, based on which of them were batched together to form indexes of aggregate queries.
The first group of queries is accumulated together into an index of aggregate queries in which every row corresponds to a value in the column that stores the admission type for a particular patient, thus providing an individual's hospitalization's nature or circumstances. These queries are all based on the Admissions table, which has $58{,}976$ rows. For this experiment, $4$ indexes of aggregate queries of dimension $4$x$58{,}976$ were batched.
The queries for which the matrices were constructed for batching into this index of aggregate queries are:

\begin{itemize}
\item Count of urgent admissions: This is a query to compute the number of admissions to the hospital that were categorized as `urgent.' {\fontfamily{cmr}\selectfont\color{gray!75!black}\small SELECT COUNT(*) FROM admissions WHERE admission\_type = ‘URGENT.’}
\normalfont

\item Mean hospital stay duration for patients of Hispanic ethnicity categorized as urgent admissions: This query looks more closely at admissions to the hospital that were categorized as `urgent' and filters out the Hispanic population. It then computes the mean duration of stay for this filtered subset of patients.
{\fontfamily{cmr}\selectfont\color{gray!75!black}\small SELECT MEAN(hospitalization\_duration) FROM admissions WHERE admission\_type = ‘URGENT’ AND ethnicity = ‘HISPANIC.’}
\normalfont 

\item Most recent admission categorized as urgent: This query fetches the urgent hospital admission with the most recent timestamp value, thus performing a MAX aggregation:
{\fontfamily{cmr}\selectfont\color{gray!75!black}\small SELECT MAX(admittime) FROM admissions WHERE admission\_type = ‘URGENT.’}
\normalfont

\item Histogram of admission types: This is a query to obtain the counts of the different buckets in the admission\_type column. So, essentially it is a number of count queries batched together. The equivalent SQL query is:
{\fontfamily{cmr}\selectfont\color{gray!75!black}\small SELECT COUNT(*) FROM admissions WHERE admission\_type = A.}
\normalfont Where A represents a possible value in the admission\_type column.
\end{itemize}

While the previous set of queries was all batched at the level of admission type, we also constructed indexes of aggregate queries that were batched at the patient level. This results in indexes of aggregate queries of dimension $1{,}400$x$4{,}156{,}450$, since each row of the matrix corresponds to one of the $1{,}400$ Hispanic patients in the dataset and the database constructed by joining the admissions and prescription tables has $4156450$ rows. For this experiment, two indexes of aggregate queries were constructed and batched. The queries that these matrices serve are:

\begin{itemize}
\item Total number of doses of drugs administered to a Hispanic patient admitted as an urgent admission: This query aggregates over the dose\_val\_rx field for each patient to obtain the desired response for a particular patient. {\fontfamily{cmr}\selectfont\color{gray!75!black}\small SELECT SUM(dose\_val\_rx) FROM admissions WHERE admission\_type = ‘URGENT’ AND  subject\_id = ‘100012.’} \normalfont

\item Total Hospitalization Duration of Hispanic patient admitted as an emergency admission: This query aggregates over the hospitalization duration field for a patient receiving an emergency admission, thus performing addition over stay durations across every instance of emergency hospitalization of a particular patient present in the database. {\fontfamily{cmr}\selectfont\color{gray!75!black}\small
SELECT SUM(hospitalization\_duration) FROM admissions WHERE subject\_id = ‘100012’ AND admission\_type = ‘EMERGENCY.’}
\normalfont
\end{itemize}

\vspace*{-0.15cm}
\subsubsection{Case Study II: Twitter Database}
Twitter is a very popular social media platform with a sizable active user base and, thus, hosts an enormous amount of data that users can query.  A malicious database owner capable of observing the details of incoming queries to the database can use the information gained from the queries to make inferences about the users querying the database. For instance, numerous users query Twitter about political information. Observing these queries can help a database owner make inferences about a user’s political affinity or, more generally, a user’s political interests, which is potentially sensitive information that a user would normally prefer to keep private. Additionally, tracking a user’s queries over a period of time can allow the database owner to make inferences about a user’s preferences and interests that can be used for targeted advertising. It is thus useful for such databases to incorporate a scheme for privatizing query information. Owing to the volume of data on Twitter, we decided to scrape a large number of records from Twitter to implement our scheme and demonstrate its scalability for application to large real-world databases. We extracted around a million tweets from Twitter that have political relevance using popular hashtags such as 'USElections', 'congress', 'Trump' and more and limited the search space for scraping to a 2 year period around US elections. The scraped database is around $400$ MB in size.

We devise indexes of aggregate queries to serve four distinct queries to the database and batch them together. Each row of the batched index of aggregate queries corresponds to a distinct user in the database, and the number of columns corresponds to the number of records in the database. For this experiment, $2$ indexes of aggregate queries were constructed and batched. The queries are:

\begin{itemize}

\item Total likes received across all tweets made by a particular user: This query aggregates all the likes (positive responses) that any specific individual has obtained and provides a sense of their popularity. {\fontfamily{cmr}\selectfont\color{gray!75!black}\small
SELECT SUM(like\_count) FROM twitter\_data WHERE user\_id = ‘100012.’} \normalfont

\item Count of tweets by a particular user that received no retweets:
Similar to the previous query, the count of tweets by a user that was not shared by any other user also provides a sense of how popular or influential the user is. {\fontfamily{cmr}\selectfont\color{gray!75!black}\small
SELECT COUNT(*) FROM twitter\_data WHERE user\_id = ‘100012’ AND no\_retweets = 0. }\normalfont

\item Mean number of likes on posts made by a politician: The average number of likes on the tweets made by a politician’s account indicates their popularity, and observing a user seek this information also sheds some light on their political interests and affinities. {\fontfamily{cmr}\selectfont\color{gray!75!black}\small SELECT MEAN(like\_count) FROM twitter\_data WHERE user\_id = ‘100012.’ }\normalfont

\item Histogram of likes on the posts of a set of politicians: To do a comparative analysis of the popularity of a group of politicians, a histogram can be computed for the likes on the tweets made by the group of politicians. This is achieved by batching a number of SUM queries, each computing the sum of likes on the tweets of a single politician. {\fontfamily{cmr}\selectfont\color{gray!75!black}\small SELECT SUM(like\_count) FROM twitter\_data WHERE user\_id in (‘100012’, ‘100013’, ‘100014’) GROUP BY user\_id. }\normalfont 
\end{itemize}

\vspace*{-0.25cm}
\subsubsection{Case Study III: Yelp Database}\label{subsec:yelp_cs}
In order to compare our protocol with an existing scheme for performing private aggregate queries called Splinter, we recreate an experiment presented by the authors of Splinter. Splinter uses the Yelp academic dataset to emulate querying a restaurant review website. 
The queries implemented are:
\begin{itemize}
\item 
{\fontfamily{cmr}\selectfont\color{gray!75!black}\small SELECT COUNT(*) WHERE category="Thai."}

\item 
{\fontfamily{cmr}\selectfont\color{gray!75!black}\small SELECT TOP 10 restaurant WHERE category="Mexican"  AND (hex2mi=1 FOR hex2mi=2 OR hex2mi=3) ORDER BY stars.}

\item {\fontfamily{cmr}\selectfont\color{gray!75!black}\small SELECT restaurant, MAX(stars) WHERE category="Mexican" OR category="Chinese" OR category="Indian" OR category="Greek" OR category="Thai" OR category="Japanese" GROUP BY category.}
\normalfont
\end{itemize}

We only demonstrate the first and the third query using our protocol since the second query requires precomputing the hex2mi field that was part of the preprocessing step for Splinter, and recreating it requires explicit details of the exact computation that generates the field. The attribute details for each case study are available in Appendix~\ref{s:apndx:attributedetails}.

 \begin{table*}[!ht]
 \centering
 \begin{tabular}{|c|c|c|p{1cm}|p{1cm}|p{1cm}|p{1cm}|p{1cm}|p{1cm}|p{1cm}|p{1cm}|}
 \hline
 {\multirow{1}{*}{\rotatebox[origin=c]{90}{Case study database}}} & 
 \multirow{1}{*}{\rotatebox[origin=c]{90}{Search space}} &
 \multicolumn{1}{p{2cm}|}{Index of aggregate queries matrix for} & 
 \multicolumn{1}{p{1.5cm}|}{Index matrix size, $p\times r$ } &
 \multicolumn{1}{p{1cm}|}{Index matrix generation time (secs)} &
 \multicolumn{1}{p{1.5cm}|}{Multiple index matrices batching time (mins)} &
 \multicolumn{1}{p{1.3cm}|}{Additional data structure storage Size (MiB)} & 
 \multicolumn{1}{p{1.5cm}|}{VspM throughput on GPU (clients/sec)} &    
 \multicolumn{3}{p{3cm}|}{Server response generation time (secs) per query} \\ 
\cline{9-11}
 &
 &
 &
 &
 &
 &
 &
 &
 All Attr.&
 Essen. Attr. &
 Regular IT-PIR \\
 \hline
 \parbox[t*]{2mm}{\multirow{4}{*}{\rotatebox[origin=c]{90}{Twitter}}} &
 {\multirow{2}{*}{\rotatebox[origin=c]{90}{All}}} &
 Likes &
 \multirow{1}{*}{$333{,}286$$\times$} &
 $1.98$&
 \multirow{2}{*}{$4.450$}&
 \multirow{2}{*}{$22.77$}&
 \multirow{2}{*}{$4{,}927.76$}&
 \multirow{2}{*}{$1.13$}&
 \multirow{2}{*}{$0.45$} &
 \multirow{4}{*}{$1.13$}
 \\
 &
 &
 Retweets&
 $1{,}004{,}129$&
 $7.77$&
 &
 &
 &
 &
 &
  \\
\cline{2-10} &
{\multirow{2}{*}{\rotatebox[origin=c]{90}{Filt.}}} &
 Likes &
 $2{,}714$$\times$&
 $0.06$&
 \multirow{2}{*}{$0.039$}&
 \multirow{2}{*}{$5.30$}&
 \multirow{2}{*}{$5{,}601.52$}&
 \multirow{2}{*}{$0.01$}&
 \multirow{2}{*}{$0.01$} &
 \\
 &
 &
 Retweets&
 $1{,}004{,}129$&
 $0.09$&
 &
 &
 &
 &
 &
 \\
 \hline
 \parbox[t]{2mm}{\multirow{6}{*}{\rotatebox[origin=c]{90}{MIMIC 3}}} &
 {\multirow{4}{*}{\rotatebox[origin=c]{90}{All}}} & 
 Admission Type&
 \multirow{4}{*}{$4 \times 58{,}976$}&
 $0.06$&
 \multirow{4}{*}{$0.002$}&
 \multirow{4}{*}{$0.76$}&
 \multirow{4}{*}{$2{,}0534.12$}&
 \multirow{4}{*}{$0.11$}&
 \multirow{4}{*}{$0.04$}&
 \multirow{4}{*}{$0.11$}
 \\
 &
 &
 Ethnicity &
 &
 $0.39$&
 &
 &
 &
 &
 &
 \\
 &
 &
 Latest Admission&
 &
 $0.98$&
 &
 &
 &
 &
 &
 \\
 & 
 &
 Oldest Admission&
 &
 $0.95$&
 &
 &
 &
 &
 &
 \\
 \cline{2-11}
 & 
 {\multirow{2}{*}{\rotatebox[origin=c]{90}{Filt.}}}
 &
 Dosage 
 & $1{,}400\times$
 &$2.12$
 &\multirow{2}{*}{$0.101$}
 &\multirow{2}{*}{$34.34$}
 &\multirow{2}{*}{$4{,}412.80$}
 &\multirow{2}{*}{$0.26$}
 &\multirow{2}{*}{$0.10$}
 &\multirow{2}{*}{$7.37$}
 \\
 & 
 & 
 Stay Duration 
 &$4{,}156{,}450$
 &$2.03$
 &
 &
 &
 &
 &
 &
 \\
 \hline
 \parbox[t]{2mm}{\multirow{4}{*}{\rotatebox[origin=c]{90}{Yelp}}} &
 {\multirow{2}{*}{\rotatebox[origin=c]{90}{All}}} &
 Restaurant Type 
 &$1{,}312\times$
 &$1.09$
 &\multirow{2}{*}{$0.070$}
 &\multirow{2}{*}{$5.33$}
 &\multirow{2}{*}{$10{,}364.48$}
 &\multirow{2}{*}{$0.13$}
 &\multirow{2}{*}{$0.03$}
 &\multirow{2}{*}{$0.13$}
 \\
 & 
 & 
 Highest Rated 
 &$150{,}346$
 &$10.88$
 &
 &
 &
 &
 &
 &
 \\
  \cline{2-11}
  & 
  {\multirow{2}{*}{\rotatebox[origin=c]{90}{Aug.}}} &
  Restaurant Type &$1{,}312 \times$
 &$3.51$
 &\multirow{2}{*}{$0.061$}
 &\multirow{2}{*}{$21.78$}
 &\multirow{2}{*}{$6{,}585.88$}
 &\multirow{2}{*}{$0.54$}
 &\multirow{2}{*}{$0.14$}
 &\multirow{2}{*}{$0.54$}
 \\
 & 
 & 
 Highest Rated 
 &$601{,}384$
 &$43.06$
 &
 &
 &
 &
 &
 &
 \\
 \hline
 \end{tabular}
 \caption{Experimental results for three case studies. $\p$ is the number of possible search terms and $\r$ is the number of records in the database. The search space column indicates which results correspond to the all-possible search terms (All), and which correspond to filtered versions of the queries with the search spaces reduced (Filt.) for Twitter and MIMIC 3. For Yelp, the Aug. results correspond to the results obtained for the scaled-up version of the dataset. Index matrix generation time measures the time required to scan the database and generate the individual indexes of aggregate queries in CCS representation. Index matrix batching time measures the time required to interpolate and evaluate the polynomials required for batching. Data structure storage size is the size of the CCS representation files of the batched indexes of aggregate queries. VspM throughput measures the number of clients (queries) being served per second on a GPU. The server response generation times measure the time required to generate the response with all the attributes of the database and with only the attributes undergoing aggregation using our protocol, and for Goldberg's regular IT-PIR with a positional query, respectively. For all timing measurements, the mean values of $100$ measurement trials are reported, error bars were negligible and were thus not reported.}
 \vspace*{-0.8cm}
 \label{tab:table_1}
 \end{table*}

The indexes of aggregate queries for the first set of queries on the MIMIC 3 database were $4\times58{,}976$ in dimension and were thus consistently faster to generate than the $333{,}286\times1{,}004{,}129$ dimensional Twitter database indexes of aggregate queries or the other MIMIC 3 indexes of aggregate queries which were of dimension $1{,}400\times4{,}156{,}450$. Overall, the time required for generating indexes is consistently of the order of a few seconds, with the index for aggregating hospitalization duration of any patient taking the longest for generation, at around $30$ seconds. 
It should be noted that generating indices is typically a one-time activity and should ideally be performed only once, when the database itself is updated. The same applies to the process of batching the indexes of aggregate queries, most of which require less than a minute, which is sufficiently fast, given the fact that this would be a single time operation performed everyday in a practical implementation. The storage requirements for the indexes of aggregate queries are also nominal, with the larger MIMIC 3 queries requiring the most space, but they require just around $34$ MB for storage. The server response generation times for each of the queries are remarkable for a database with a million records. The maximum time the protocol takes for a query is around $0.45$ seconds, which is for the Twitter queries.
The performance of the YELP queries were also comparable to those of Twitter and MIMIC 3. To mirror the queries in Splinter \cite{wang2017splinter} exactly, we duplicated the records in the database to scale it by a factor of $4$ like they did, and found the server response time and the storage size to increase linearly. In addition to the initial versions of the Twitter queries, we applied a filter on the users to only include those who had at least a single tweet with over $100$ likes, in order to shrink the search space to a size that may be somewhat more common for a regular query to a database. This significantly improved index generation time, GPU throughput, as well as server response generation time. Additionally, it requires a fraction of storage space when compared to the extreme case with all the users included.

In order to accelerate the server response generation, zero-column operations can be skipped entirely to save compute time, but doing so is contingent on the structure of the indexes of aggregate queries. As in the case of Twitter, if the indexes of aggregate queries are such that there are no columns where every element is $0$, there will be no benefit obtained from this optimization, and this is reflected by the fact that the server response generation throughput both with and without zero column skipping implemented is $1.13$ secs/query for the unfiltered Twitter queries. The server response generation throughput can be further optimized by constraining the servers to only operate on and produce the fields in the database on which aggregations are performed. Since that is a much smaller subset of the all the fields in the database, the server response generation becomes significantly faster for every query.  

\paragraph*{Comparison with Baseline}In order to obtain a baseline to compare our protocol against, we evaluate the queries in our case studies using Goldberg’s IT-PIR technique~\cite{goldberg2007improving}. The performance of Goldberg’s protocol is reported in the 3rd sub-column, "Regular IT-PIR" of the last column, "Server response generation time per query" in Table \ref{tab:table_1}. While our protocol produces a server response from the 4 million record-long MIMIC-3 database in a fraction of a second, Goldberg’s protocol requires over 7 seconds. Unlike our protocol, Goldberg’s technique would require multiple such interactions between client and server in order to serve expressive aggregate queries, thus increasing the actual query response generation time significantly. Goldberg’s protocol is designed to send positional queries to the servers and so, aggregate queries with conditions such as searching or sorting require multiple rounds of interaction with the database, consequently requiring significantly more time in several circumstances to produce results for aggregate queries when compared to our protocol. Goldberg’s protocol can be described as an atomic unit for more complex queries like aggregate queries. The speedup obtained by our protocol over the baseline Goldberg’s protocol is further enhanced when operating with only the necessary attributes for computations, as evidenced by the results presented in the 2nd sub-column, "Essen. Attr." of the last column, "Server response generation time per query" in Table \ref{tab:table_1}. 

The steps common to our protocol and the baseline Goldberg’s IT-PIR protocol are performed on the CPU. This first entails client query generation on CPU and distribution using Shamir’s secret sharing. The secret share vectors sent to the servers undergo an additional step in our protocol that is not a part of the baseline protocol. This step involves a multiplication of the vectors with the sparse IAQ matrices, and is measured on the GPU in order to accelerate the protocol. This produces a vector which is multiplied with the database matrices on CPU on each server, and this step is similar to the step in the baseline protocol where the secret share vector is multiplied with the database matrices. These steps are evaluated on the CPU for both protocols. The same server is used for evaluating both protocols.
\vspace*{-0.15cm}
\section{Discussion}\label{s:apndx:discussion}
\vspace*{-0.15cm}
\subsection{Database Updates}Enterprise databases have a fixed refresh rate (often daily), and thus, the database could potentially change with every refresh. A client query is associated with a specific version of the database, and so a new version of the database would require a new client query, as is common for any PIR protocol. The process of updating the indexes of aggregate query can be automated and is quite trivial owing to them being sparse and being stored in compressed column storage (CCS) representation. This can be achieved by a stored procedure for noting the updated values resulting from the database refresh and scanning the CCS files, and updating the corresponding stored values (both in index as well as pointer files, in case the number of non-zero elements change as a result of the database refresh). In case an index is required to be generated from scratch, a one-time effort (for a specific version of a database) is required to employ our efficient hash map-based algorithm for scanning and constructing the matrices. 

Suppose the client wishes to execute a query that the existing IAQs cannot serve. Combining that IAQ would require a similar effort. This involves a simple scanning operation on the relevant database attribute in order to produce the CCS-represented files and batch them with existing IAQ matrices of the same dimension. The orders of time for these operations can be estimated from some of the examples in TABLE \ref{tab:table_1}.  These updates should be reflected in every server hosting the database and the IAQ matrices. As per the protocol, the length of client queries is made to be equal to the number of rows ($\p$) in the IAQ matrices, not the number of records ($\r$) in the database unless $\p = r$. Thus, the length of client queries is only impacted if and when a  database refresh results in a change in the number of search terms in an attribute of interest, resulting in a change in the number of rows in the corresponding IAQ matrix. This protocol, like other PIR protocols, is designed to obfuscate reading access patterns only, not write patterns. There are further complexities that emerge when integrating novelties into existing live systems, and the scope of this research is restricted to the design and evaluation of a novel protocol, with a preliminary exploration of practical implementations. 

\vspace*{-0.25cm}
\subsection{Minimum Configuration}
\vspace*{-0.15cm}
Being a $\t$-private $\u$-batch IT-PIR protocol, the minimum $\t$ and $\u$ is 1, in which case, the number of servers $\ell$ is 2. For our $\t$-private $\u$-batch IT-PIR scheme, $\ell$ must be at least $\t+\u$ to let the client reconstruct the correct record. Thus, for the minimal case where both $\t$ and $\u$ are both equal to 1 ($\u$=1 would mean that there is only a single index of aggregate queries, and thus, there is no batching), the desired query response can be reconstructed only when both servers produce the correct response, and there is no collusion. However, if additional servers are incorporated, as long as any two (or more) of the servers are not colluding, the query response can be successfully reconstructed. This allows our protocol to tolerate collusion of a number of servers, as long as at least $\t+\u$ servers do not collude.

\vspace*{-0.25cm}
\subsection{Byzantine Robustness}\label{ss:byzantinerobust}
\vspace*{-0.15cm}
Most modern distributed storage and computation schemes rely on servers that are owned by third parties. In light of the current landscape of cyber security and privacy, it is not unreasonable to believe that one or more of the servers in a multi-server scheme can become compromised at some point and either not respond or produce incorrect responses. Thus, it is important to take the Byzantine robustness of multi-server schemes into account when considering their practical viability. Our model can be extended to a $\v$-Byzantine-robust model from our basic assumption of an honest-but-curious (passive) adversary. Devet et al.~\cite{devet2012optimally} proposes $\t$-private $\v$-Byzantine-robust vector-matrix model-based IT-PIR protocols that our protocol conforms to as well owing to its (algebraic) linearity property, allowing it to fit into any vector-matrix model-based IT-PIR protocol. This protocol modifies the client-side configuration of Goldberg’s protocol~\cite{goldberg2007improving} by using improved decoding algorithms to achieve Byzantine robustness equal to the theoretical limit ($\v < k - \t- 1$, where $k$ is the number of servers that respond). Thus, our protocol can tolerate collusion or Byzantine failure of up to $\v$ servers without compromising the integrity of query outputs reconstructed from the server responses. Under this formulation, the adversary is assumed to be passive when $\v = 0$.

\vspace*{-0.15cm}
\subsection{Comparison with Splinter}
\vspace*{-0.15cm}
The major challenge in performing a perfect comparison with Splinter’s~\cite{wang2017splinter} Yelp case study results is that Splinter’s results were measured on a server with 1.9 TB of RAM, which is significantly higher than our 256 GB RAM server. Additionally, while Splinter reports the total response time, a breakdown of the execution times for each step in the process is not provided (with the exception of the information that the network latency of the provider was 14ms). Given these differences in the experimental setup, when comparing the two common queries (Subsection \ref{subsec:yelp_cs}) between the protocols, although Splinter is faster at serving the relatively simpler Restaurant Type query (54 ms vs. our protocol’s 140 ms with essential attributes only), our protocol produces a quicker response time for the more complex Highest Rated query, thus underscoring the practical viability of implementing our protocol for enterprise databases. The code base for the Function Secret Sharing implemented in Splinter is available on Github, but dedicated end-to-end code for reproducing the results reported in the paper and instructions for the same are not publicly available, thus making it challenging to execute both protocols on the same server.

\vspace*{-0.15cm}
\section{Additional Experiments}\label{s:apndx:experiments}
\vspace*{-0.05cm}
\subsection{Server Response Generation For Larger Databases}
\vspace*{-0.15cm}
\begin{table*}[!ht]
    \centering
    \begin{tabular}{|p{1.2cm}|l|l|l|l|l|l|l|l|}
    \hline
        {DB Size} & {Case Study} & {Records} & {Record Size} & {Protocol} & \textbf{GF($2^8$)} & \textbf{GF($2^{16}$)} & {\small$-m\mathbb{ZZ}_P-w128$} & {\small$-m\mathbb{ZZ}_P -w256$} \\ \hline
\parbox[t*]{2mm}{\multirow{4}{*}{$40$ GiB}} &
 {\multirow{2}{*}{\rotatebox[origin=c]{0}{Twitter Filt.}}} &
 \multirow{2}{*}{$1{,}004{,}129$} &
 \multirow{2}{*}{$41.8$ KiB} &
 Baseline &
 \multirow{1}{*}{$16.8$} &
 \multirow{1}{*}{$32.1$} &
 \multirow{1}{*}{$1006.1$} &
 \multirow{1}{*}{$580.6$}
 \\
 \cline{5-9}
 &
 &
 &
 &
 This work &
 \multirow{1}{*}{$0.1$} &
 \multirow{1}{*}{$0.2$} &
 \multirow{1}{*}{$7.7$} &
 \multirow{1}{*}{$4.9$}
  \\
\cline{2-9} &
{\multirow{2}{*}{\rotatebox[origin=c]{0}{MIMIC 3 Filt.}}} &
 \multirow{2}{*}{$4{,}156{,}450$} &
 \multirow{2}{*}{$10.5$ KiB} &
 Baseline &
 \multirow{1}{*}{$17.5$} &
 \multirow{1}{*}{$33.6$} &
 \multirow{1}{*}{$1010.6$} &
 \multirow{1}{*}{$589.3$}
 \\
 \cline{5-9}
 &
 &
 &
 &
 This work &
 \multirow{1}{*}{$0.6$} &
 \multirow{1}{*}{$1.1$} &
 \multirow{1}{*}{$36.6$} &
 \multirow{1}{*}{$21.2$}
  \\  
  \hline

\parbox[t*]{2mm}{\multirow{4}{*}{$64$ GiB}} &
 {\multirow{2}{*}{\rotatebox[origin=c]{0}{Twitter Filt.}}} &
 \multirow{2}{*}{$1{,}004{,}129$} &
 \multirow{2}{*}{$67.0$ KiB} &
 Baseline &
 \multirow{1}{*}{$27.5$} &
 \multirow{1}{*}{$51.0$} &
 \multirow{1}{*}{$1754.1$} &
 \multirow{1}{*}{$988.8$}
 \\
 \cline{5-9}
 &
 &
 &
 &
 This work &
 \multirow{1}{*}{$0.2$} &
 \multirow{1}{*}{$0.3$} &
 \multirow{1}{*}{$12.2$} &
 \multirow{1}{*}{$7.6$}
  \\
\cline{2-9} &
{\multirow{2}{*}{\rotatebox[origin=c]{0}{MIMIC 3 Filt.}}} &
 \multirow{2}{*}{$4{,}156{,}450$} &
 \multirow{2}{*}{$16.5$ KiB} &
 Baseline &
 \multirow{1}{*}{$27.8$} &
 \multirow{1}{*}{$51.7$} &
 \multirow{1}{*}{$1703.8$} &
 \multirow{1}{*}{$981.3$}
 \\
 \cline{5-9}
 &
 &
 &
 &
 This work &
 \multirow{1}{*}{$0.9$} &
 \multirow{1}{*}{$1.5$} &
 \multirow{1}{*}{$49.9$} &
 \multirow{1}{*}{$33.2$}
  \\  
  \hline

  \end{tabular}
  \vspace*{-0.15cm}
  \caption{Server response generation times for different modulus sizes using our IAQ protocol and the baseline protocol on augmented $40$GiB and $64$GiB databases. Reported server response times are in seconds and are the means of $10$ trials.}\vspace*{-0.55cm}
 \label{tab:table_2}
\end{table*}

For our initial experimentation with case studies, the primary objective was to showcase that our protocol can be used for executing realistic queries on real-world databases with a relatively high number of records. However, despite some of the case studies having databases with records in the order of millions, the overall size of the databases was not too big. As a result, we picked two of our case studies (Twitter Filt. and MIMIC 3 Filt. of Table \ref{tab:table_1}) and augmented each record in the corresponding databases to make the total size of the databases $40$ GiB (while keeping the number of records in the corresponding databases the same like before). The Twitter and the MIMIC 3 datasets were (row-wise) augmented $535 \times$
and  $84 \times$, respectively, compared to the original datasets. With these $40$ GiB databases, we executed the same queries as in the main case study for the corresponding databases, and we measured the wall clock time for execution of the queries using our IAQ protocol as well as the baseline Goldberg’s IT-PIR protocol. We also measure our results for a number of different modulus sizes, including GF($2^8$), similar to Hafiz-Henry~\cite{hafiz2017querying}. Since in the binary field, GF($2^8$) operations, each word is 8 bits, and a lookup table is used, multiplications are performed faster than regular arithmetic. As a result, it is a practical and common choice for the modulus. 


As per our observations from Table~\ref{tab:table_2}, GF($2^8$) and GF($2^{16}$) moduli provide faster query response times than the $128$-bit and the $256$-bit moduli. As we increase the modulus (a.k.a. each word size), since the size of each row is kept fixed, the number of words is reduced, and beyond a certain point, increasing the modulus size results in faster query response owing to a reduction in the number of multiplications and additions resulting from the reduction in number of words. Using the GF($2^8$) modulus, our protocol obtains a \textasciitilde$141\times $ improvement in server response generation time over the baseline protocol for the Twitter queries and a \textasciitilde$31\times $ improvement for the MIMIC 3 queries. The server response generation for both Twitter and MIMIC 3 queries closely mirror each other for the baseline protocol since the size of both the databases have been augmented to the same value of $40$ GiB, but since our protocol exploits the sparsity of our derived data structures, which are different for different databases, the server response generation times for the two databases using our protocol does not always shadow each other despite the size of the database being identical.

We also perform the same set of experiments by augmenting the databases to a size of $64$ GiB, and the trends are similar to those observed for experiments on the $40$ GiB databases.


\vspace*{-0.15cm}
\subsection{Benchmarking Query Throughput on CPU}
\vspace*{-0.15cm}
In this section, we discuss benchmarking the query throughput of our protocol on the CPU. In our main results, we perform the step of multiplying client vectors with batch indexes of aggregate queries on the GPU. This was motivated by the fact that most enterprise compute servers that deal with large volumes of data typically have GPU(s) on them. 
In order to assess the performance of the query throughput on the CPU, we replicate the setup of our first benchmarking experiment. For instance, the number of rows ($\p$) in the indexes of aggregate queries is set to $2^{16}$, the number of rows over which aggregation occurs is set to $4$, and the number of indexes batched is kept constant at $6$. The number of columns in the indexes of aggregate queries ($\r$) is varied from $2^{16}$ to $2^{20}$ and the number of clients served per second is observed with various modulus bit sizes, e.g., $2^7$, $2^8$, and $2^9$. Note that our NTL-based CPU implementation of vector-sparse-matrix (VspM) multiplication is very basic and doesn't leverage any parallel processing or hand-optimized assembly coding, contrary to our GPU implementation. There are optimization and engineering tricks available even for CPU implementation that were out of the scope of our attention. 

We observe that the query throughput drops drastically on the CPU, as expected. While with a $2^7$-bit modulus, it is possible to serve $8$ queries per second for a database with $2^{16}$ records\emdash this number diminishes to around $5$ queries per second when the number of records is increased to $2^{20}$ (around a million) and the modulus size is $2^9(=512)$-bit. However, despite a seemingly low throughput, it should be noted that server response generation is performed on the CPU, and our case studies demonstrate that given the order of server response generation times reported in Table \ref{tab:table_1} and Table \ref{tab:table_2}, it is unlikely that this step will become a bottleneck even when computed on the CPU with no sophisticated optimization tricks.


\vspace*{-0.25cm}
\section{Related Work}\label{s:related-work}
\vspace*{-0.15cm}
Despite a relative dearth of literature on expressive querying mechanisms that can support a plethora of aggregate queries, there have been a number of attempts to achieve such protocols due to the importance of aggregate queries in data analytics. Wang et al.  \cite{wang2017splinter} proposes an approach to privatizing aggregate queries called Splinter, which aims to hide sensitive parameters in a user's query by extending function secret sharing \cite{boyle2015function} to hide aggregate query results and specific inputs. Splinter splits a query into multiple parts and sends each part to a different provider, which holds a copy of the data. This query obfuscates only the particular item value that the user desires, such as a specific item ID. The provider then inputs all the items in the shared function it got from the user and sends them back to the user. The user then receives all the results and combines all the shares to produce the desired result. This approach is used in different ways depending on the aggregate query. Splinter, however, has a strict assumption of non-collusion, while our protocol can tolerate collusion of servers up to a configurable threshold number. Additionally, our protocol can perform several operations, such as MIN, MAX, and Top-K, with a single interaction. In contrast, Splinter requires multiple round trips for instances of these queries with a large number of disjoint conditions.

Similarly, recent work attempts to provide users with an interface for private aggregate queries \cite{zhao2022information}. Still, it formulates the construction for a single server K user problem. They present a two-pass model for secure aggregation, where over the first round, any number of users less than U out of K users respond to the server that wants to learn the sum of the users, and the remaining users are viewed as dropped. Over the second round, any number of users less than U out of the surviving users respond. From the information obtained from the surviving users over the two rounds, the server can decode the desired sum. This protocol is not formulated as a means of securely performing an aggregate query for a user – rather, it tries to solve the problem statement of securely computing the sum of a number of inputs provided by a group of users, while some users may not respond. It is thus also limited in the aggregations it can support.

Ahmad et al. \cite{ahmad2021coeus} proposes a scheme somewhat similar to what our protocol aims to achieve. They present a system for oblivious document ranking and retrieval with a pipeline for sending keywords to a server, a ranked list of documents sent back by the server in response, and one of the documents selected and opened. Although sharing similarities with our protocol in that both schemes employ the preparation of specialized data structures for enhancing the process of private information retrieval and are capable of supporting Top-K queries, they are solving inherently different problems because COEUS was explicitly developed for search relevance queries. Thus, Top-K is the only aggregate query it can support. While COEUS focuses primarily on oblivious ranking and retrieval of search results, our protocol primarily focuses on oblivious retrieval of aggregated results from a database.

A number of other recent works attempt to tackle the challenges of securely performing aggregations. For instance, Kadhe et al.  \cite{kadhe2020fastsecagg} proposes a solution scheme based on a finite-field version of the Fast Fourier Transform that also relies on Shamir's secret sharing, but the correctness of the response may not be assured. Additionally, it is designed only for aggregating updates shared by clients to a central server, whereas our protocol is not restricted to aggregations – it can support expressive keyword searches as well. Similarly, Bonawitz et al.  \cite{bonawitz2017practical} proposes a multi-party computation-based solution for secure aggregation that also uses Shamir's secret sharing. Pillutla et al. \cite{pillutla2022robust} and So et al.  \cite{so2020byzantine} attempt to solve the secure aggregation problem specifically in the setting of federated learning. However, the problem that these works attempt to solve involves collecting updates from numerous clients and aggregating them in an incorruptible server, and they attempt to solve certain problems that are not necessarily a concern for our protocol, such as malicious clients attempting to corrupt the central aggregation by sharing poisoned updates.

Somewhat similar to our protocol, searchable symmetric encryption (SSE) schemes can be used for keyword searches over encrypted documents. To achieve efficiency, SSE schemes are inflicted with some degree of leakage. The vast majority of the literature on SSE only considers the leakage from the encrypted search index, which is only a part of the overall SSE system, and as a result, they do not have as much protection for queries from a system-wide viewpoint \cite{gui2021rethinking}. Additionally, most SSE schemes focus on index retrieval only, and few perform both index retrieval and document retrieval. Our protocol has a major benefit over SSE schemes because it can privately search and retrieve documents from a database without requiring any encryption for the database and can also retrieve the actual documents, unlike the majority of SSE schemes that only perform index retrieval. Although there exist SSE schemes capable of supporting conjunctive and disjunctive queries \cite{bag2023two}, they aren't typically capable of performing private aggregate queries and, thus, can, at best, support a subset of our protocol's capabilities.

Oblivious RAM (ORAM) attempts to solve a similar problem as PIR. While the amortized communication complexity of ORAMs is historically low, and they feature no computation on the server, the lower bounds often make them infeasible to implement for practical applications. Additionally, the feasibility of implementation for ORAMs is further impacted by them often requiring a download and reshuffle of the entire database~\cite{mayberry2013efficient}. There are works that have pushed ORAMs towards sublinear worst-case complexity~\cite{kushilevitz2012security, shi2011oblivious}, but schemes proposing further improvements to communication complexity have come at the cost of increasing client memory from constant to logarithmic~\cite{stefanov2018path} or polynomial~\cite{stefanov2013oblivistore, stefanov2011towards}. For dynamic searchable encryption, ORAM can achieve a worst-case search complexity of $\ O(Mlog(N))$ where$\ N$ is the total number of keyword-file identifier pairs and$\ M$ is the maximum number of matched file identifiers in the index~\cite{wu2023obi}. However, ORAMs can be used in conjunction with PIR schemes to exploit PIRs’ better worst-case guarantees and become feasible for implementation for practical applications~\cite{mayberry2013efficient}. 

Like ORAMs, oblivious datastores also attempt to mitigate similar threats as PIR does. Pancake~\cite{grubbs2020pancake} formulates a passive adversary capable of continuously monitoring accesses to encrypted data. While the assumption that adversaries cannot inject queries allows it to outperform well-known ORAM schemes such as PathORAM~\cite{stefanov2018path}, Pancake requires accurate knowledge of the original access distribution to guarantee security. Waffle~\cite{maiyya2023waffle}, another oblivious datastore, improves upon Pancake by removing the requirement for prior knowledge of access sequences, and improves performance by compromising on security guarantees, since providing completely uniform accesses on the server is expensive. Our protocol does not require prior access patterns and does not compromise on security guarantees for improved performance.

\section{Conclusion and Future Work}\label{s:conclusion}
We presented a novel framework that augments regular IT-PIR protocols (e.g., Goldberg's IT-PIR) with aggregate queries, e.g., SUM, COUNT, MEAN, Histogram, etc. We proposed constructions of effective indexes of aggregate queries comprising new standard aggregate vectors. When we conflux these auxiliary data structures with the polynomial batch coding technique, we open up new opportunities, i.e.,  enabling aggregation with searching, sorting, ranking, and various constraints.
We evaluated them by benchmarking and simulating several privacy-preserving real-world applications that enable a user concerned about privacy to submit aggregate queries to untrustful databases obliviously. Our results show that efficient implementation of our framework on GPU can achieve fast query response time while assuring the privacy of aggregate queries. An interesting avenue for future research would be to explore techniques that support more complex queries, e.g., various JOIN operations, potentially utilizing different batch coding and indexing schemes.

\section*{Acknowledgement}
We thank Ryan Henry for collaborating on the prior work~\cite{hafiz2017querying} and for supervising the PhD dissertation research~\cite{hafiz2021privateir} of the leading author, Syed Mahbub Hafiz, upon which the ideas developed in this paper are built. We are thankful to NDSS anonymous reviewers of this paper and the associated artifact, including our anonymous shepherd, for their valuable comments to improve the manuscript and artifact. This project is based upon work supported partly by the UC Noyce Institute: Center for Cybersecurity and Cyberintegrity (CCUBE).
\bibliographystyle{ieeetr}
\bibliography{refs}
\appendices

\section{Attribute Details and Additional Information for Case Studies}\label{s:apndx:attributedetails}
Through our case studies, we attempt to demonstrate the scalability of our protocol with respect to the number of records in the database being queried. Each of our case studies has a different number of rows in the database, with the MIMIC-3 database having 4 million records. Additionally, we further demonstrate the scalability of our protocol through our benchmarking experiments, where we vary the number of records in the database up to over 16 million. Our benchmarking experiments suggest that the impact of a variation in the width (number of columns) of the database is largely inconsequential across the range of variation that was tested, motivating us to focus more on the variation in the number of rows in the database.
\vspace*{-0.15cm}
\subsection{MIMIC 3}
\vspace*{-0.15cm}
The fields required for constructing the indexes of aggregate queries, performing filtering, and observing aggregations are the type of admissions, ethnicity of patients, patient id, admission and discharge time of patients, and drug dosage. 
To support some of the queries, we perform some preprocessing on the database to derive a few columns from existing columns in the MIMIC 3 database. These involve computing a duration of hospitalization field using the admit time and discharge time fields, an additional column that demarcates whether the ethnicity of any given patient is Hispanic or not, and finally, an integer id column for type of admission. Overall, there are 9 fields, $4$ Bytes each, resulting in each record being $36$ Bytes in size. However, in order to serve all the queries in this case study, aggregation occurs only over $4$ of those fields (admission type id, field indicating whether a patient is Hispanic or not, hospital duration and drug dosage), thus providing the option of retaining only those fields when observing PIR throughput for the queries.
\vspace*{-0.15cm}
\subsection{Twitter}
\vspace*{-0.15cm}
Publicly available databases are often scraped for usage in research, and Hafiz-Henry~\cite{hafiz2017querying} is an example of this, where they scrape the publicly available IACR-ePrint archive. There are other works~\cite{guan2022census, kusumasari2020scraping, al2019data, singh2016comparison} that scrape publicly available Tweets in order to use them for their research. Besides, Twitter/X provides a Python library (https://www.tweepy.org/) to do similar tasks, which suggests that they are ethically not opposed to the usage of publicly shared tweets. Scraped tweets have been used for a broad spectrum of research, as exemplified by some of these works, which cover social network analysis for social sciences, identification of crucial needs and responses during disasters, linking patients with common illnesses to form enriched datasets for future research and comparison of the performance of classifiers.

Among the scraped fields, the ones that we use for building the indexes of aggregate queries and for aggregating are the number of likes on the tweets, the number of reshares on each tweet, the raw tweet itself, and also the user ids associated with each tweet.
Similar to the MIMIC 3 database, we perform some preprocessing on the database to generate some additional columns to aid with counting. The first additional column is an integer user\_id column with distinct integer values for every distinct user in the database, and the other additional column is a binary column to indicate whether a tweet was retweeted at least once or not. We use $5$ attributes per record, with each record being $276$ Bytes in size. However, since aggregation occurs only over $3$ of those fields (integer user id field, count of likes, and the field that indicates whether a tweet has been reshared or not), we are required only to retain those fields (total $12$ Bytes) to observe results of aggregations. 
\vspace*{-0.35cm}
\subsection{Yelp}
\vspace*{-0.15cm}
We utilize the fields business id, stars, and category. We perform preprocessing to add a field that indicates whether a restaurant belongs to the category `Thai.' This results in each record possessing $4$ attributes, each of size $4$ Bytes, leading to the size of each record being 16 Bytes.

\section{Security Analysis in the light of Hafiz-Henry~\cite{hafiz2017querying}}\label{s:apndx:securityanalysis}
The difference between our protocol and Goldberg's IT-PIR protocol is the inclusion of indexes of aggregate queries. We prove that even after the inclusion of both a single index of aggregate queries and $\u$-batch index of aggregate queries, the protocol is still $\t$-private.
\vspace*{-0.15cm}
\subsection{Single index of aggregate queries is $\t$-private}
\vspace*{-0.15cm}
We claim that a $\t$-private Goldberg's IT-PIR query through a single (unbatched) index of aggregate queries $\oldPi$ is still $\t$-private. 
The definition of regular $t$-privacy requires, for every coalition $\coalition\subseteq\ival{\ell}$ of at most $\t$ servers and for every record index $i\in\ival{\r}$, that
\begin{align}
    \Pr\bigl[I=i\bigm\vert Q_{\coalition}=(\oldPi\semicolon\vecQ[\j_1],\ldots,\vecQ[\j_{\t}])\bigr]&=\Pr\bigl[I=i\bigr],\label{eq:equal}
\end{align}
where the random variables $I$ and $Q_{\coalition}$ denote respectively describing the block index the user requests and the joint distribution of share vectors it sends to servers in $\coalition$ (including the ``hint'' that the query would go through $\oldPi$).

However, in case some of the blocks of $\D$ cannot be accessed through $\oldPi$, \equationref{eq:equal} requires modification. Multiple different indexes of aggregate queries are likely to exist, each having an associated \emph{conditional distribution} for $I$, and it is not adequate to constraint $I$ to span solely across blocks accessible via $\oldPi$. A sound definition would require incorporating the notion of curious PIR servers updating their priors by utilizing the information they gain upon observing which specific index of aggregate queries a user request passes through. In order to reflect this idea, we modify the $\t$-privacy definition as follows.
\begin{definition}\label{appndx:def:tprivacy}
Let $\D\in\F^{\rbys}$ and let each $\oldPi_1,\ldots,\oldPi_n$ be an index of aggregate queries for $\D$. Requests are \textsl{$\t$-private with respect to $\oldPi_1,\ldots,\oldPi_n$} if, for every coalition $\coalition\subseteq\ival{\ell}$ of at most $\t$ servers, for every set of record indexes $\setofindexes\in\ival{\r}^{\numberofones}$, and for every index of aggregate queries $\oldPi\in\{\oldPi_1,\ldots,\oldPi_n\}$,
\begin{align*}
    \Pr\bigl[I=\setofindexes\bigm\vert Q_{\coalition}=(\oldPi\semicolon\vecQ[\j_1],\ldots,\vecQ[\j_{\t}])\bigr]&=\Pr\bigl[I=\setofindexes\bigm\vert E_{\oldPi}\bigr],\label{eq:equal-modified}
\end{align*}
where $I$ and $Q_{\coalition}$ denote the random variables respectively describing the record indexes the user requests (to aggregate) and the joint distribution of query vectors it sends to servers in $\coalition$ (including the ``hint'' that the query would go through $\oldPi$), and where $E_{\oldPi}$ is the event that the request passes through $\oldPi$. (Recall the set $\setofindexes$ and its cardinality $\numberofones$ notations from~\definitionref{def:aggregatebasis}.)
\end{definition}
\vspace*{-0.3cm}
Notice that the privacy guarantee is identical between a $\t$-private expressive query through a basic index of aggregate queries $\oldPi$ and a $\t$-private aggregate query over the database $\D_{\oldPi}\coloneqq\oldPi\cdot\D$. This results in \definitionref{appndx:def:tprivacy} reducing to a usual $\t$-privacy definition when $\Pi\in\F^{\rbyr}$ is the identity matrix. 

This observation, along with the $\t$-privacy of Goldberg's IT-PIR~\cite{goldberg2007improving,henry2016polynomial}, leads us to the next theorem.
\newcommand*{\thmsimple}{Let $\D\in\F^{\rbys}$ and let each $\oldPi_1,\ldots,\oldPi_n$ be a simple index of aggregate queries for $\D$. If $\oldPi\in\{\oldPi_1,\ldots,\oldPi_n\}$ with $\oldPi\in\F^{\pbyr}$ and if \smash{$(\x[1],\vecQ[1]),\ldots,(\x[\ell],\vecQ[\ell])$} is a component-wise $(\t+1,\ell)\mkern2mu$-threshold sharing of a standard basis vector $\e[]\in\F^{\p}$ from a client, then $\ell$ tuples, \smash{$(\oldPi,\x[1],\vecQ[1]),\ldots,(\oldPi,\x[\ell],\vecQ[\ell])$}, are $\t$-private with respect to $\oldPi_1,\ldots,\oldPi_n$.}
\begin{theorem}\label{THM:SIMPLE}
\thmsimple
\end{theorem}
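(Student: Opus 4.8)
The plan is to reduce the statement to the $\t$-privacy of Goldberg's ordinary positional IT-PIR applied to the \emph{virtual} database $\D_{\oldPi}\coloneqq\oldPi\cdot\D$, exactly as the observation preceding the theorem suggests, and then to isolate the only genuinely new ingredient — the fixed public matrix $\oldPi$ — as a hint that carries no index-dependent information. Concretely, with $\oldPi$ fixed, a client that shares the standard basis vector $\e[]\in\F^{\p}$ and routes the shares through $\oldPi$ is simply requesting row $i$ of $\oldPi$, which by \definitionref{def:simpleindex:aggregate} is a standard aggregate vector $\aggregate[\setofindexes]$; the retrieved quantity is then $\aggregate[\setofindexes]\cdot\D$, the $i$th record of $\D_{\oldPi}$. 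The upshot is that the aggregated set $\setofindexes$ is a \emph{deterministic} function of the category index $i$ once $\oldPi$ is fixed, so it suffices to argue that the coalition learns nothing about $i$ beyond the hint.

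First I would invoke the perfect threshold secrecy of Shamir's $(\t+1,\ell)$-scheme, applied component-wise to $\e[]$: for any coalition $\coalition\subseteq\ival{\ell}$ with $\cardinality{\coalition}\le\t$, the joint distribution of the observed shares $\vecQ[\j_1],\ldots,\vecQ[\j_{\t}]$ is independent of the shared secret $\e[]$, and hence of the category index $i$ it encodes. This is precisely the property underpinning the $\t$-privacy of Goldberg's IT-PIR~\cite{goldberg2007improving,henry2016polynomial}. Bundling the fixed public matrix $\oldPi$ into each tuple as a hint adds no further information about $i$, since $\oldPi$ is common to all admissible secrets.

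Next I would translate this independence into the conditional-probability equality of \definitionref{appndx:def:tprivacy}. Because the coalition's view — the $\t$ share vectors together with $\oldPi$ — is independent of $i$ given the event $E_{\oldPi}$, and because $\setofindexes$ is a deterministic image of $i$ under the fixed $\oldPi$, the posterior on $\setofindexes$ after observing the shares coincides with the posterior conditioned on $E_{\oldPi}$ alone, which is exactly the required identity. It is essential here to keep the hint $\oldPi$ on \emph{both} sides: distinct $\oldPi_1,\ldots,\oldPi_n$ induce distinct conditional distributions over $\setofindexes$, so the theorem asserts only that the shares leak nothing beyond the identity of the routed index, not that $\setofindexes$ is uniform.

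The hard part will be the careful bookkeeping of these conditional distributions, in particular handling columns of $\oldPi$ that are identically zero (records never aggregated under $\oldPi$) and ensuring that the reduction to $\D_{\oldPi}$ respects the $\F^{\p}$-dimensional query space rather than the original $\F^{\r}$ one. These are the same subtleties already flagged in Observation~\ref{obs:pseudopermute:aggregate}, and I expect them to be dispatched by that observation's accounting of $(1,\ell)$- versus $(\t+1,\ell)$-shares, after which the remainder collapses to the standard Shamir secrecy argument.
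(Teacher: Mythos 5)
Your proposal is correct and follows essentially the same route as the paper's own proof: the paper likewise observes that, with $\oldPi$ fixed, the event $I=\setofindexes$ coincides with the deterministic event $\e[J]\cdot\oldPi=\e[\setofindexes]$, and then invokes the $(\t+1,\ell)$-threshold secrecy of the Shamir shares to replace conditioning on the coalition's view $Q_{\coalition}$ by conditioning on the hint event $E_{\oldPi}$ alone. Your added remarks about all-zero columns and Observation~\ref{obs:pseudopermute:aggregate} are harmless but unnecessary, since keeping $E_{\oldPi}$ on both sides of the defining equality already absorbs that issue, exactly as in the paper.
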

\vspace*{-0.35cm}
\begin{proof}
We consider a coalition $\coalition$ comprising $\t$ servers and fix $\setofindexes\in\ival{\r}^{\numberofones}$ and $\oldPi\in\{\oldPi_1,\ldots,\oldPi_n\}$ with $\oldPi\in\F^{\pbyr}$. $I$ denotes the random variables describing the indexes (within $\D$) of the blocks (to aggregate and) requested by the user, $J$ denotes the index of the standard basis vector encoded by the user in the query, and $Q_{\coalition}$ corresponds to the joint distribution of share vectors, along with the ``hint'' indicating that the query passes through the index of aggregate queries matrix $\oldPi$, sent to the servers in $\coalition$.

As per \definitionref{appndx:def:tprivacy}, we need to show that $$\Pr[I=\setofindexes\mid Q_{\coalition}=(\oldPi\semicolon\vecQ[\j_1],\ldots,\vecQ[\j_{\t}])]=\Pr[I=\setofindexes\mid E_{\oldPi}],$$ where $E_{\oldPi}$ denotes the event that the user's request is via $\oldPi$. The key observation underlying the proof is that $$\Pr[I=\setofindexes\mid E_{\oldPi}]=\Pr[\e[J]\cdot\oldPi=\e[\setofindexes]\mid E_{\oldPi}]$$ 
and $\Pr[I=\setofindexes\mid Q_{\coalition}=(\oldPi\semicolon\vecQ[\j_1],\ldots,\vecQ[\j_{\t}])]=\Pr[\e[J]\cdot\oldPi=\e[\setofindexes]\mid Q_{\coalition}=(\oldPi\semicolon\vecQ[\j_1],\ldots,\vecQ[\j_{\t}])].$

Therefore, we have
\begin{align*}
	\Pr\bigl[I=\setofindexes\bigm\vert E_{\oldPi}\bigr]
		&=\Pr\bigl[\e[J]\cdot\oldPi=\e[\setofindexes]\mid E_{\oldPi}\bigr]\\
		&=\Pr\bigl[\e[J]\cdot\oldPi=\e[\setofindexes]\bigm\vert Q_{\coalition}=(\oldPi\semicolon\vecQ[\j_1],\ldots,\vecQ[\j_{\t}])\bigr]\\
		&=\Pr\bigl[I=\setofindexes\bigm\vert Q_{\coalition}=(\oldPi\semicolon\vecQ[\j_1],\ldots,\vecQ[\j_{\t}])\bigr].
\end{align*}
Here, the second line utilizes the $\t$-privacy of secret shares, \smash{$(\x[1],\vecQ[1]),\ldots,(\x[\ell],\vecQ[\ell])$}, of the query vector received by $\ell$ servers, individually.
\end{proof}

\vspace*{-0.15cm}
\subsection{$\u$-batch index of aggregate queries is $\t$-private}
\vspace*{-0.15cm}
Here, we prove~\theoremref{THM:UARY:aggregate}.
\newcommand*{\thmuary}{Fix $\u>1$ and $j\in\ival[0]{\u-1}$, and let $\oldPi=\bigl(\oldPi_1,\ldots,\oldPi_{\ell}\bigr)\in\bigl(\F^{\pbyr}\bigr){}^{\ell}$ be $\ell$ buckets of a $\u$-batch index of aggregate queries with bucket (server) coordinates $\x[1],\ldots,\x[\ell]\in\F\setminus\{0,\ldots,\u-1\}$. If \smash{$(\x[1],\vecQ[\j_1]),\ldots,(\x[\ell],\vecQ[\j_{\ell}])$} is a sequence of component-wise $(\t+1,\ell)\mkern2mu$-threshold shares of a standard basis vector $\e[]\in\F^{\p}$ encoded at $x=j$, then \smash{$(\oldPi,\x[1],\vecQ[\j_1]),\ldots,(\oldPi,\x[\ell],\vecQ[\j_{\ell}])$} is $\t$-private with respect to $\oldPi$.}
\begin{theorem}\label{THM:UARY}
\thmuary
\end{theorem}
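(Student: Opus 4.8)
The plan is to reduce \theoremref{THM:UARY} to the single-index case already settled in \theoremref{THM:SIMPLE}, by isolating the only query-dependent data a $\t$-coalition can ever observe. First I would fix a coalition $\coalition\subseteq\ival{\ell}$ of at most $\t$ servers, a target set $\setofindexes\in\ival{\r}^{\numberofones}$, and the batch index $\oldPi=(\oldPi_1,\ldots,\oldPi_{\ell})$; write $J$ for the random variable giving the index of the standard basis vector $\e[]\in\F^{\p}$ that the client encodes at $x=j$, and $I$ for the record set to be aggregated. The central observation is that each bucket $\oldPi_n\in\F^{\pbyr}$ is a fixed, publicly known matrix, so the server-side product $\vecQ[\j_n]\cdot\oldPi_n$ is a deterministic, query-independent function of data server $n$ already holds. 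Hence the entire information content of the coalition's view $\Query[\coalition]=(\oldPi\semicolon\vecQ[\j_1],\ldots,\vecQ[\j_{\t}])$ is carried by the share vectors together with the hint $E_{\oldPi}$, and the batch structure itself leaks nothing beyond that hint.

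Next I would invoke the $\t$-privacy of Shamir's $(\t+1,\ell)$-threshold sharing on those share vectors. Because the client encodes $\e[]$ at $x=j$ with $j\in\ival[0]{\u-1}$, while the bucket coordinates satisfy $\x[1],\ldots,\x[\ell]\in\F\setminus\ival[0]{\u-1}$, the evaluation point $j$ is distinct from every server coordinate $\x[n]$. This disjointness is precisely the condition under which the standard secret-sharing argument survives the move from encoding at $x=0$ to encoding at $x=j$: any $\t$ of the shares $\vecQ[\j_1],\ldots,\vecQ[\j_{\t}]$ are distributed independently of the secret $\e[]$, hence independently of $J$.

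With those two facts in hand, the conclusion transcribes the proof of \theoremref{THM:SIMPLE} line for line. Writing $\pi_j$ for the matrix obtained by evaluating $\oldPi$ at $x=j$ (a simple index of aggregate queries exactly by \definitionref{def:batchindex:aggregate}), the requested set is governed by $\e[J]\cdot\pi_j=\e[\setofindexes]$, so
\begin{align*}
\Pr\bigl[I=\setofindexes\bigm\vert E_{\oldPi}\bigr]
&=\Pr\bigl[\e[J]\cdot\pi_j=\e[\setofindexes]\bigm\vert E_{\oldPi}\bigr]\\
&=\Pr\bigl[\e[J]\cdot\pi_j=\e[\setofindexes]\bigm\vert \Query[\coalition]=(\oldPi\semicolon\vecQ[\j_1],\ldots,\vecQ[\j_{\t}])\bigr]\\
&=\Pr\bigl[I=\setofindexes\bigm\vert \Query[\coalition]=(\oldPi\semicolon\vecQ[\j_1],\ldots,\vecQ[\j_{\t}])\bigr],
\end{align*}
where the middle equality is exactly where the independence established in the previous step is consumed.

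I expect the main obstacle to be the bookkeeping that justifies replacing the individual buckets $\oldPi_n=\oldPi(\x[n])$ by the single evaluated index $\pi_j=\oldPi(j)$ without introducing leakage. One must argue carefully that although the client encodes at $x=j$ and each server multiplies by a \emph{different} bucket, the only thing a $\t$-coalition could hope to extract is determined by the common evaluated index $\pi_j$, and that it is the disjointness $j\notin\{\x[1],\ldots,\x[\ell]\}$ that keeps the Shamir guarantee intact under this shifted encoding point. Everything else is a faithful copy of the single-index argument.
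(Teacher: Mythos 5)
Your reduction correctly handles the secrecy of $J$, but it silently discards the second secret present in the batch setting: the encoding coordinate itself. In the paper's probability model the coordinate is a random variable $K$ (the coalition is not told which of the $\u$ batched indexes the query targets\emdash hiding this is the very purpose of batching, cf.\ Section~\ref{s:batch-coding}), and the record set $I$ is determined jointly by $J$ and $K$. Consequently your first equality, $\Pr\bigl[I=\setofindexes\bigm\vert E_{\oldPi}\bigr]=\Pr\bigl[\e[J]\cdot\pi_j=\e[\setofindexes]\bigm\vert E_{\oldPi}\bigr]$, is not valid: conditioned only on $E_{\oldPi}$, the event $\{I=\setofindexes\}$ is the union over $k\in\ival[0]{\u-1}$ of $\{K=k\}\cap\{\e[J]\cdot\pi_k=\e[\setofindexes]\}$, not the single event attached to $\pi_j$. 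Your chain becomes correct only if one conditions on $K=j$ throughout, i.e., if the coalition is assumed to know the targeted coordinate; but that proves a strictly weaker statement, and one that defeats the point of batching, since a server that learns $K$ learns which index of aggregate queries (hence which family of record sets) the client is after.

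The paper's proof fills exactly this hole. It applies the law of total probability over the coordinate, writing $\Pr[I=\setofindexes\mid E_{\oldPi}]=\sum_{k=0}^{\u-1}\Pr[\e[J]\cdot\pi_k=\e[\setofindexes]\mid K=k,E_{\oldPi}]\cdot\Pr[K=k\mid E_{\oldPi}]$ together with the analogous expansion conditioned on the coalition's view, and then uses Shamir $\t$-privacy twice: once to equate the conditional factors (this is your middle step), and once\emdash implicitly, by keeping the mixing weights $\Pr[K=k\mid E_{\oldPi}]$ unchanged in both expansions\emdash to assert that $\t$ shares reveal nothing about $K$ either. Your observation that $j\notin\{\x[1],\ldots,\x[\ell]\}$ keeps the threshold guarantee intact under the shifted encoding point is correct, and it is part of why both independence claims hold; but to repair the proposal you must explicitly state and use the independence of the view from $K$ and restore the sum over $k$. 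With those additions your argument coincides with the paper's.
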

\begin{proof}
The proof for this theorem is similar to that of \theoremref{THM:SIMPLE}. We consider a coalition $\coalition$ comprising $\t$ servers and fix $\setofindexes\in\ival{\r}^{\numberofones}$ and $\oldPi\in\{\oldPi_1,\ldots,\oldPi_n\}$ with $\oldPi\in\F^{\pbyr}$. $I$ denotes the random variables describing the indexes (within $\D$) of the blocks requested by the user, $J$ denotes the index of the standard basis vector encoded by the user in the query, $K$ denotes the $x$-coordinate at which the standard basis vector is encoded, and $Q_{\coalition}$ corresponds to the joint distribution of share vectors, along with the ``hint'' indicating that the query passes through the $\u$-batch index of aggregate queries $\oldPi$, sent to the servers in $\coalition$.

As per \definitionref{appndx:def:tprivacy}, we need to show that $$\Pr[I=\setofindexes\mid Q_{\coalition}=(\oldPi\semicolon\vecQ[\j_1],\ldots,\vecQ[\j_{\t}])]=\Pr[I=\setofindexes\mid E_{\oldPi}],$$ where $E_{\oldPi}$ denotes the event that the user's request is through $\oldPi$. The key observation is that $$\Pr[I=\setofindexes\mid E_{\oldPi}]=\nsum_{k=0}^{\u-1}\Pr[\e[J]\cdot\pi_k=\e[\setofindexes]\mid K=k,E_{\oldPi}]\cdot\Pr[K=k\mid E_{\oldPi}]$$ and $\Pr[I=\setofindexes\mid Q_{\coalition}=(\oldPi\semicolon\vecQ[\j_1],\ldots,\vecQ[\j_{\t}])]=\nsum_{k=0}^{\u-1}\Pr[\e[J]\cdot\pi_k=\e[\setofindexes]\mid K=k,Q_{\coalition}=(\oldPi\semicolon\vecQ[\j_1],\ldots,\vecQ[\j_{\t}])]\cdot\Pr[K=k\mid E_{\oldPi}]$.

Therefore, we have
\begin{align*}
	\Pr\bigl[I\mkern-0.75mu=\mkern-0.5mu\setofindexes\mkern-0.75mu\bigm\vert\mkern-0.75muE_{\oldPi}\mkern-0.25mu\bigr]\mkern-1mu
		&=\mkern-2mu\nsum[1.35]_{k=0}^{\u-1}\mkern-2mu\Pr[\e[J]\mkern-0.5mu\cdot\mkern-0.5mu\pi_k\mkern-0.5mu=\mkern-0.5mu\e[\setofindexes]\mkern-0.5mu\mid\mkern-0.5mu K\mkern-0.5mu=\mkern-0.5muk,E_{\oldPi}]\cdot\Pr[K=k\mid E_{\oldPi}]\\
		&=\mkern-2mu\nsum[1.35]_{k=0}^{\u-1}\mkern-2mu\Pr[\e[J]\mkern-0.5mu\cdot\mkern-0.5mu\pi_k\mkern-0.5mu=\mkern-0.5mu\e[\setofindexes]\mkern-0.5mu\mid\mkern-0.5mu K\mkern-0.5mu=\mkern-0.5muk,Q_{\coalition}=(\oldPi\semicolon\vecQ[\j_1],.,\vecQ[\j_{\t}])]\\
                &\cdot\Pr[K=k\mid E_{\oldPi}]\\
		&=\mkern-2mu\Pr\bigl[I=\setofindexes\bigm\vert Q_{\coalition}=(\oldPi\semicolon\vecQ[\j_1],\ldots,\vecQ[\j_{\t}])\bigr],
\end{align*}
Here, the second line utilizes the $\t$-privacy of secret shares, \smash{$(\x[1],\vecQ[1]),\ldots,(\x[\ell],\vecQ[\ell])$}, of the query vector received by $\ell$ servers, individually.
\end{proof}

\begin{corollary}\label{cor:uary}
The proposed private aggregate queries protocol is a $\t$-private IT-PIR as long as at least $\t+\u$ of $\ell$ servers respond correctly.
\end{corollary}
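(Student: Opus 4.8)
The plan is to decompose the statement into its two independent halves — $\t$-privacy and correct reconstruction from $\t+\u$ responses — and dispatch each separately, since each is governed by a different parameter.

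For privacy I would simply invoke the theorems already proved. When $\u>1$, \theoremref{THM:UARY} establishes that a query encoded at some $x=j\in\ival[0]{\u-1}$ and routed through a $\u$-batch index of aggregate queries is $\t$-private with respect to $\oldPi$; the degenerate case $\u=1$, a single unbatched index, is exactly \theoremref{THM:SIMPLE}. The point to stress is that $\t$-privacy (\definitionref{appndx:def:tprivacy}) constrains only what a coalition of at most $\t$ servers can infer from the share vectors it holds, and is silent about how many servers eventually answer; hence the privacy threshold stays at $\t$ independently of the reconstruction count, so privacy persists whenever $\t<\t+\u\le\ell$.

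For correctness I would run a degree-counting argument on the underlying Shamir/ramp polynomials. The client's query is a component-wise $(\t+1,\ell)$-threshold sharing of $\e[]$, so each $\vecQ[\j_i]$ is the evaluation at $x=\x[i]$ of a vector of degree-$\t$ polynomials; by \definitionref{def:batchindex:aggregate} each bucket $\oldPi_i$ is the evaluation at $x=\x[i]$ of a matrix of degree-$(\u-1)$ polynomials whose values at $x=0,\ldots,\u-1$ recover the simple indexes $\pi_0,\ldots,\pi_{\u-1}$. The per-server product $(\vecQ[\j_i]\cdot\oldPi_i)\cdot\D$ is therefore the evaluation at $x=\x[i]$ of a vector of polynomials of degree exactly $\t+(\u-1)=\t+\u-1$, multiplication by the constant matrix $\D$ not raising the degree. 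Because a univariate polynomial of degree $\t+\u-1$ is pinned down by any $\t+\u$ of its evaluations, once at least $\t+\u$ servers return correct responses the client can Lagrange-interpolate this product componentwise and read off its value at $x=j$, which equals $\e[]\cdot\pi_j\cdot\D=\aggregate[\setofindexes]\cdot\D$ by the identity $\e[]\cdot\pi_j=\aggregate[\setofindexes]$ of Observation~\ref{obs:permute:aggregate}, i.e.\ the sought component-wise aggregate.

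The delicate part is the bookkeeping that makes the two degrees compose correctly: the privacy degree $\t$ from the threshold sharing and the batch degree $\u-1$ from the ramp encoding must add in the product to give reconstruction degree $\t+\u-1$ while leaving the privacy threshold untouched at $\t$. I would check that the product has degree at most $\t+\u-1$ (only an upper bound is needed, so no non-cancellation argument on the leading coefficient is required), and that the server coordinates $\x[1],\ldots,\x[\ell]\in\F\setminus\{0,\ldots,\u-1\}$ stay pairwise distinct and disjoint from the encoding points $0,\ldots,\u-1$, so that the $\t+\u$ interpolation nodes are admissible and the target value at $x=j$ is recoverable from the interpolant. Stringing these observations together yields the corollary.
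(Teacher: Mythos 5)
Your proposal is correct and takes essentially the same route as the paper, which states this corollary without an explicit proof as an immediate consequence of Theorem~\ref{THM:UARY} (privacy, with Theorem~\ref{THM:SIMPLE} covering $\u=1$) together with the reconstruction threshold the protocol inherits from Goldberg's IT-PIR under Henry's polynomial batch coding. Your degree-counting argument\emdash query shares of degree $\t$ times buckets of degree $\u-1$ yield response polynomials of degree at most $\t+\u-1$, so any $\t+\u$ correct evaluations at the admissible coordinates $\x[1],\ldots,\x[\ell]\in\F\setminus\{0,\ldots,\u-1\}$ determine the value at $x=j$\emdash is precisely the standard justification the paper leaves implicit by deferring to the cited prior work.
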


For $\u=1$, i.e., a single index of aggregate queries, no batching occurred, we have\emdash
\begin{corollary}\label{cor:uary}
The proposed private aggregate queries protocol is a $\t$-private IT-PIR as long as at least $\t+1$ of $\ell$ servers respond correctly.
\end{corollary}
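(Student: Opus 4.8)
The plan is to read this corollary as the $\u=1$ instance of the framework already assembled, and to split its assertion into a privacy claim and a correctness (reconstruction) claim. When $\u=1$ there is no batch coding: each server hosts a single, unbatched index of aggregate queries $\oldPi\in\F^{\pbyr}$, so the privacy half is immediate from \theoremref{THM:SIMPLE} (equivalently, from \theoremref{THM:UARY} with $\u=1$, where the defining sum collapses to the lone term $\pi_0=\oldPi$). That theorem already guarantees that the $\ell$ tuples $(\oldPi,\x[1],\vecQ[1]),\ldots,(\oldPi,\x[\ell],\vecQ[\ell])$ are $\t$-private, so no coalition of at most $\t$ servers learns anything about the requested record set beyond the hint carried by $\oldPi$ itself.

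First I would dispatch correctness by tracking the degree of the polynomials underlying the secret shares. The client shares the standard basis vector $\e[]\in\F^{\p}$ component-wise under Shamir's $(\t+1,\ell)$-threshold scheme, so each coordinate of the share $\vecQ[i]$ handed to server $i$ is the evaluation at $x=\x[i]$ of a degree-$\t$ polynomial whose distinguished value (its constant term, since $\u=1$ forces the encoding coordinate to be $0$) is the matching coordinate of $\e[]$. The two server-side operations — left-multiplication first by $\oldPi$ and then by $\D$ — are $\F$-linear with \emph{constant} coefficients when $\u=1$, so each output word is a fixed linear combination of the input-coordinate polynomials. Such combinations never raise the degree, so every word of server $i$'s response is again a degree-$\t$ polynomial evaluated at $x=\x[i]$, whose distinguished value is precisely the corresponding word of $\e[]\cdot\oldPi\cdot\D$; and since $\e[]\cdot\oldPi=\aggregate[\setofindexes]$ by construction, that value is exactly the sought aggregate $\aggregate[\setofindexes]\cdot\D$.

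Reconstruction then reduces to interpolation: any $\t+1$ distinct, correct evaluations determine a degree-$\t$ polynomial uniquely, so the client recovers each word of $\aggregate[\setofindexes]\cdot\D$ by Lagrange interpolation from any $\t+1$ of the $\ell$ responding servers, with the remaining $\ell-(\t+1)$ free to be unavailable. I expect the only delicate point to be the degree-preservation step: I would spell out explicitly that, because for $\u=1$ the matrices $\oldPi$ and $\D$ carry scalars rather than polynomials, no degree inflation occurs, which is exactly why the reconstruction threshold stays at $\t+1$ rather than the $\t+\u$ demanded in the batched regime of the preceding corollary.
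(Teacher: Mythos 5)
Your proposal is correct and follows essentially the paper's own route: the paper states this corollary without a separate proof, treating it exactly as you do\emdash privacy is inherited from \theoremref{THM:SIMPLE}, and the $\t+1$ reconstruction threshold is the standard Shamir $(\t+1,\ell)$-threshold interpolation argument, which your degree-tracking makes explicit (for $\u=1$ the matrices $\oldPi$ and $\D$ are constant over $\F$, so server responses remain degree-$\t$ polynomial evaluations). The only nit is your parenthetical appeal to \theoremref{THM:UARY}: that theorem and \definitionref{def:batchindex:aggregate} are stated only for $\u>1$, so \theoremref{THM:SIMPLE} is the correct citation for the unbatched case, as you rightly lead with.
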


\newpage
\section{Artifact Appendix}

In this appendix, we will provide a high-level guide for the usage of our codebase and for reproducing the results presented in our work. Our experiments were performed on a server with $256$GB of RAM and $5$ GPUs. Software installations required for using our codebase include PERCY++, Socket++, CUDA, NTL, and GMP. The artifact provided at \url{https://github.com/smhafiz/private_queries_it_pir/tree/v1.0.0} supports the research in the paper by facilitating practical, real-world implementations of our protocol and also provides the capability for benchmarking the protocol on different hardware configurations before actual implementation in live databases. \url{https://github.com/smhafiz/private_queries_it_pir/blob/v1.0.0/AE_Doc_Revised.pdf} provides a detailed instruction manual for using our code for benchmarking, validating our results, as well as implementation on other datasets. This instruction manual categorically links each of our major results, both benchmarking as well as case studies, to our artifact in the repository.

\subsection{Description \& Requirements}
\subsubsection{How to access}
The codebase for our artifact and the corresponding detailed instruction manual are available at the publicly available repository: \url{https://github.com/smhafiz/private_queries_it_pir/tree/v1.0.0}. Additionally, the AEC-approved artifact version is available at the following DOI link: \url{https://doi.org/10.5281/zenodo.10225325}.
\subsubsection{Hardware Requirements}
The experimentation server had $256$GiB RAM and $5$ GPU cores. The protocol can be implemented on servers with fewer hardware resources as well, with an expected impact on performance. However, it is recommended to use a server with a GPU.
\subsubsection{Software Requirements}
Experiments were performed on a Ubuntu 22.04 server with kernel-5.4.0-88-generic and GCC version 9.4.0. Additionally, the list of dependencies required for running the codes are:
\begin{itemize}
    \item GMP 6.3.0
    \item NTL 11.5.1
    \item Socket++
    \item Percy++ 1.0.0 
\end{itemize}
Further instructions for environment setup are available in our detailed instruction manual available at \url{https://github.com/smhafiz/private_queries_it_pir/blob/v1.0.0/AE_Doc_Revised.pdf}.

\subsubsection{Benchmark}
While the raw datasets used for our experiments are not part of the artifact, the indexes of aggregate queries for replicating our case study experiments are part of the artifact and stored in compressed column storage (CCS) format as outlined in our detailed instruction manual.

\subsection{Artifact Installation \& Configuration}

In order to configure the artifact, all the files and folders from the repository \url{https://github.com/smhafiz/private_queries_it_pir/tree/v1.0.0} need to be downloaded into the machine being used for running the experiments. Then, instructions enlisted at \url{https://github.com/smhafiz/private_queries_it_pir/blob/v1.0.0/AE_Doc_Revised.pdf} will have to be followed to complete the configuration of the environment. These instructions outline the installation of the dependencies and the corresponding configuration steps.

\subsection{Experiment Workflow}
The results section can be divided into two major sections: benchmarking and case studies. The setup and configuration mentioned in the previous section must precede all steps of either experimental workflow. The experimental workflow for the benchmarking results is straightforward, as we have standalone scripts for each benchmarking experiment for replicating them. For the case studies, the experimental flow is outlined in detail in the instruction manual at \url{https://github.com/smhafiz/private_queries_it_pir/blob/v1.0.0/AE_Doc_Revised.pdf}. It also includes a comprehensive example to help replicate the results and use the artifact in general. The workflow follows the sequence of subsections in the Case Studies section in the instruction manual. These are: 
\begin{itemize}
    \item Real-world Dataset collection (GPU not required)
    \item Scanning database and generating the IAQ matrices (GPU not required)
    \item Batching the IAQ matrices (GPU not required)
    \item Client query vector and IAQ sparse matrix multiplication time (ideally GPU required) 
    \item Server response generation time (GPU not required)
\end{itemize}

\subsection{Major Claims}
The major claims regarding the artifact are as follows:

\begin{itemize}
    \item (Claim 1): Our benchmarking experiments demonstrate the scalability of our protocol, and our artifact will help demonstrate similar trends. Also, provided testing is performed on similar hardware, our artifact will reliably reproduce the results of our benchmarking experiments.
    \item (Claim 2): Our protocol can be implemented in several real-world databases, as demonstrated by our case studies, and typically outperforms the baseline Goldberg's IT-PIR protocol. The artifact will be able to reiterate this claim and help reproduce the results of our case studies. The artifact can also be used with minor modifications for implementation on other datasets.
\end{itemize}

\subsection{Evaluation}

All the detailed setup, configuration steps, operational steps, and experimental procedures for the evaluation of our protocol through our artifact are provided at \url{https://github.com/smhafiz/private_queries_it_pir/blob/v1.0.0/AE_Doc_Revised.pdf}. Our experiments can be broadly categorized into the benchmarking experiments and the case studies. The document first enlists the necessary steps for reproducing the benchmarking experiments, linking each major result in our benchmarking section to experimentation instructions. Results can be reproduced if the same hardware configuration as ours is used. The document also provides a detailed set of instructions on how to perform our case study experiments. Under the assumption that the same hardware specifications are available during the evaluation of the artifact as ours, the time required for generating the case study results should be as indicated in the results, and most of the benchmarking experiments should be trivial (with the exception of Figure 5, which can take up to a few hours). Error bars for the vast majority of our experiments were found to be trivial, and that should be the case for any evaluation performed using our artifact.

\subsubsection{Preparation}

The steps outlined for preparing and configuring the software environment, as mentioned earlier and in the instruction manual, are to be performed first before any experiments can be reproduced.

\subsubsection{Execution}

The benchmarking results are obtained by running the individual scripts for the corresponding experiments. In order to execute the case study scripts, the codes associated with each section of the experimental workflow outlined earlier must be executed in sequence as per the detailed instructions in the manual. Please refer to the example at the end of the instruction manual to get a sense of the exact commands that need to be executed. 

\subsubsection{Results}

The results for the benchmarking experiments are stored in corresponding output files and can be plotted to replicate the graphs. For the case studies, most results are printed out into the terminal. Details on how to observe the results are available in the instruction manual.

\subsection{Customization}

Our artifact allows several kinds of customization. For the benchmarking experiments, benchmarks can be obtained over different ranges of values for parameters and for different fixed values of parameters by making small tweaks inside the shell script. The variable values can be altered to desired values or ranges of values to obtain alternative benchmarks. The case study results are specific to the datasets we used, but in order to use the artifact for other datasets, the steps in the instruction manual can be used to prepare the indexes of aggregate queries for other datasets and queries. Once they are stored in compressed column storage (CCS) format, the succeeding steps in the instruction manual should allow a user to obtain the desired server response times.

\subsection{Notes}

This appendix only provides a high-level guideline for using our artifact and understanding the experimental flow. Our instruction manual available with the artifact at \url{https://github.com/smhafiz/private_queries_it_pir/blob/v1.0.0/AE_Doc_Revised.pdf} is a more comprehensive and thorough walkthrough of how to use the artifact. It contains details regarding each step and outlines the exact instructions that need to be executed. We strongly recommend reading the manual for a more in-depth understanding of how to use our artifact to perform our experiments as well as extend it for further experiments with new datasets.

\end{document}